\newtheorem{prop}{Proposition}
\newtheorem{thm}{Theorem}
\newtheorem{rmk}{Remark}
\newtheorem{lemma}{Lemma}
\newtheorem{defi}{Definition}
\newtheorem{exe}{Example}
\newtheorem{cor}{Corollary}
\newcommand{\eff}{\text{eff}}
\newcommand{\vol}{\mbox{vol}}
\newcommand{\xx}{\mathbf{x}}
\newcommand{\yy}{\mathbf{y}}
\newcommand{\nozero}{\backslash\left\{0\right\}}
\begin{document}

\title{Random Ensembles of Lattices \\from Generalized Reductions}

\author{Antonio Campello, \textit{Member, IEEE} \thanks{
This work was supported in part by the S\~ao Paulo Research Foundation (fellowship 2014/20602-8).

\IEEEauthorrefmark{2}\textit{Current address}: Department of Electrical and Electronic Engineering, Imperial College London, South Kensington Campus, London SW7 2AZ, United Kingdom (e mail: a.campello@imperial.ac.uk). 
}}

\maketitle
\begin{abstract}
We propose a general framework to study constructions of Euclidean lattices from linear codes over finite fields. In particular, we prove general conditions for an ensemble constructed using linear codes to contain dense lattices (i.e., with packing density comparable to the Minkowski-Hlawka lower bound). Specializing to number field lattices, we obtain a number of interesting corollaries - for instance, the best known packing density of ideal lattices, and an elementary coding-theoretic construction of asymptotically dense Hurwitz lattices. All results are algorithmically effective, in the sense that, for any dimension, a finite family containing dense lattices is exhibited. For suitable constructions based on Craig's lattices, this family is smaller, in terms of alphabet-size, than previous ensembles in the literature. \\[1\baselineskip]
\textbf{Keywords: } Lattices, sphere packings, random codes, ideal lattices, codes over matrix rings \end{abstract}
\section{Introduction}
There has been a renewed interest in the search for new constructions of lattices from error-correcting codes due to their various recent applications, such as coding for fading wiretap channels \cite{KosiOngOggier}, Gaussian relay networks \cite{Adaptative}, compound fading channels \cite{Our} and index codes \cite{Index}, to name only a few. For the applications considered in these works, it is desirable to lift a code over a finite field into a lattice that possesses a rich algebraic structure, often inherited from the properties of number fields. In the present work we provide an unified analysis of these constructions and investigate ``random-coding''-like results for such lattices. Our focus is on the problem of finding dense structured lattice packings, although our techniques have a much broader scope of applications (as discussed in Section \ref{sec:conclusion}).

Indeed, finding the densest packing is a central subject in the Geometry of Numbers, with a variety of well-established connections to Coding Theory. Let $\Delta_n$ denote the best possible sphere packing density achievable by a Euclidean lattice of dimension $n$. The celebrated Minkowski-Hlawka theorem (e.g. \cite{Cassels,Leker}) gives the lower bound $\Delta_n \geq \zeta(n)/2^{n-1}$ for all $n\geq 2$, where $\zeta(n) = 1+1/2^n+1/3^n+\ldots$ is the Riemann zeta function. Up to very modest asymptotic improvements, this is still, to date, the best lower bound for the best packing density in high dimensions. 

Typical methods for establishing the theorem depend on the construction of random ensembles of lattices and on mean-value results \cite{Siegel,Rogers}. Rush \cite{Rush1989} and Loeliger \cite{Loeliger} obtained the lower bound from integer lattices constructed from linear codes in $\mathbb{F}_p^n$, in the limit when $p\to\infty$, with random-coding arguments. Improvements of the lower bound, in turn, strongly rely on additional (algebraic) structure. For instance, Vance \cite{Vance} showed that the best quaternionic lattice in dimension $m$ (with real equivalent in dimension $4m$) has density at least $3m \zeta(4m)/e 2^{4m-3}\leq \Delta_{4m}$. Using lattices built from cyclotomic number fields, Venkatesh \cite{Venkatesh} established the bound $\Delta_{2\varphi(m)} \geq m/2^{2\varphi(m)-1}$, where $\varphi(m)$ is Euler's totient function (since $m$ can grow as fast as a constant times $\varphi(m)\log \log \varphi(m)$ this  provides the first super-linear improvement). From another perspective, Gaborit and Z\'emor \cite{Zemor}, and Moustrou \cite{Moustrou} exploited additional coding-theoretic and algebraic structures to significantly reduce the family size of ensembles containing dense lattices. 
\\[1\baselineskip]
\noindent \textbf{Main Contributions.}
In this work we investigate general random lattices obtained from error correcting codes. The objective of this study is twofold: we provide unified analyses and coding-theoretic proofs of the aforementioned results, as well as a simple condition to verify if any new construction can be used to build ensembles containing dense lattices. 
We start from the fairly general definition of a \textit{reduction}, i.e. a mapping that takes a lattice into the space $\mathbb{F}_p^n$. For a general reduction we prove the following:
\begin{thm} Let $\phi_{p}: \Lambda \to \mathbb{F}_{p}^n$ be a family of reductions (surjective homomorphisms), where $\Lambda$ is a lattice of rank $m$ in the Euclidean space. Consider the ensemble
	$$\mathbb{L}_{p} = \left\{\beta \phi_p^{-1}(\mathcal{C}): \mathcal{C} \mbox{ is a } k-\mbox{dimensional code in } \mathbb{F}_p^n\right\},$$
	for an appropriate normalization factor $\beta$ so that all lattices have volume $V$. Denote by $\mathcal{N}_{\Lambda^{\prime}}(r) = \#( \mathcal{B}_r \cap \Lambda')$ the number of primitive points of $\Lambda$ inside a ball of radius $r$. If the first minimum of $\Lambda_{p} = \ker \phi_{p}$ satisfies
	\begin{equation}\liminf_{p\to \infty}\left( \frac{\lambda_1(\Lambda_{p})}{p^{n/m}}\right) > 0, \mbox{ then }
	\label{eq:nonDeg}
	\end{equation}
	$$\lim_{p\to\infty} E_{\mathbb{L}_{p}}[\mathcal{N}_{\Lambda^\prime}(r)] = (\zeta(m)V)^{-1} \text{\upshape vol } \mathcal{B}_r,$$
	where the average is with respect to the uniform distribution on $\mathbb{L}_{p}$.
	\label{thm:2}
\end{thm}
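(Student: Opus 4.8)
The plan is to reduce the computation to counting primitive vectors of the fixed lattice $\Lambda$ inside a large ball. Write $M_{\mathcal C}=\phi_p^{-1}(\mathcal C)$; since $\phi_p$ is surjective and $\dim\mathcal C=k$, the lattice $M_{\mathcal C}$ has index $p^{\,n-k}$ in $\Lambda$, so $\vol(M_{\mathcal C})=p^{\,n-k}\vol(\Lambda)$ and the normalization must be $\beta=\big(V/(p^{\,n-k}\vol(\Lambda))\big)^{1/m}$, whence $r/\beta$ is a fixed constant times $p^{(n-k)/m}$ (in particular $r/\beta\to\infty$ as $k<n$). Moreover $\mathcal C\mapsto\phi_p^{-1}(\mathcal C)$ is injective, so the uniform law on $\mathbb L_p$ is the law of $\beta M_{\mathcal C}$ for a uniformly random $k$-dimensional code $\mathcal C$, and I shall take the expectation over $\mathcal C$. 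The first thing to record is that hypothesis \eqref{eq:nonDeg} makes the scale $r/\beta$ negligible next to $\lambda_1(\Lambda_p)$: as $\lambda_1(\Lambda_p)$ is eventually bounded below by a positive constant times $p^{\,n/m}$ and $k\ge 1$, we have $r/\beta<\lambda_1(\Lambda_p)$ for all large $p$, and hence every $\mathbf x\in\Lambda$ with $0<\|\mathbf x\|\le r/\beta$ lies outside $\Lambda_p=\ker\phi_p$, i.e. $\phi_p(\mathbf x)\neq 0$ in $\mathbb F_p^{\,n}$.

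The core of the argument --- and the step I expect to be the main obstacle --- is a rigidity statement: for $p$ large, the primitive vectors of $\beta M_{\mathcal C}$ of norm at most $r$ are exactly the vectors $\beta\mathbf x$ with $\mathbf x$ primitive in $\Lambda$, $\|\mathbf x\|\le r/\beta$, and $\mathbf x\in M_{\mathcal C}$. One inclusion is free, because $M_{\mathcal C}\subseteq\Lambda$ forces any $\mathbf x\in M_{\mathcal C}$ that is primitive in $\Lambda$ to be primitive in $M_{\mathcal C}$. For the converse, suppose $\mathbf x\in M_{\mathcal C}$ with $0<\|\mathbf x\|\le r/\beta$ is \emph{not} primitive in $\Lambda$, and write $\mathbf x=d\mathbf x_0$ with $\mathbf x_0\in\Lambda$ primitive and $d\ge 2$. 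Then $d\,\phi_p(\mathbf x_0)=\phi_p(\mathbf x)\in\mathcal C$, and $p\nmid d$ (otherwise $\phi_p(\mathbf x)=0$, contradicting $\mathbf x\notin\Lambda_p$); since $\mathcal C$ is a subspace, $\phi_p(\mathbf x_0)=(d\bmod p)^{-1}\phi_p(\mathbf x)\in\mathcal C$, so for every prime $\ell\mid d$ the vector $(d/\ell)\mathbf x_0$ lies in $M_{\mathcal C}$ while $\mathbf x=\ell\cdot\big((d/\ell)\mathbf x_0\big)$ --- hence $\mathbf x$ is not primitive in $M_{\mathcal C}$. It is this automatic cancellation of imprimitive vectors, made possible by the scale-invariance of the membership condition ``$\phi_p(\mathbf x)\in\mathcal C$'' together with the length bound from \eqref{eq:nonDeg}, that produces the factor $\zeta(m)$.

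Granting the rigidity statement and noting that $\{\mathbf x\in M_{\mathcal C}\}$ depends only on $\phi_p(\mathbf x)$, the expectation factorizes as
$$E_{\mathbb L_p}\!\left[\mathcal N_{(\beta M_{\mathcal C})'}(r)\right]=\!\!\sum_{\substack{\mathbf x\in\Lambda\ \text{primitive}\\ \|\mathbf x\|\le r/\beta}}\!\!\Pr\!\big(\phi_p(\mathbf x)\in\mathcal C\big)=\frac{p^{k}-1}{p^{n}-1}\cdot\#\{\mathbf x\in\Lambda:\mathbf x\ \text{primitive},\ \|\mathbf x\|\le r/\beta\},$$
where the second equality uses that each $\phi_p(\mathbf x)$ is a nonzero vector of $\mathbb F_p^{\,n}$, so the standard count of $k$-dimensional subspaces containing a fixed line gives $\Pr(\phi_p(\mathbf x)\in\mathcal C)=(p^k-1)/(p^n-1)$ independently of $\mathbf x$. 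To conclude I would invoke the asymptotics for primitive points of a fixed lattice of rank $m\ge 2$, namely $\#\{\mathbf x\in\Lambda:\mathbf x\ \text{primitive},\ \|\mathbf x\|\le R\}=\vol(\mathcal B_R)/(\zeta(m)\vol(\Lambda))+o(R^m)$ as $R\to\infty$ --- itself obtained by Möbius inversion from the classical estimate $\#\{\mathbf x\in\Lambda:0<\|\mathbf x\|\le R\}=\vol(\mathcal B_R)/\vol(\Lambda)+O(R^{m-1})$ --- and substitute $R=r/\beta$ with $\vol(\mathcal B_{r/\beta})=(r/\beta)^m\vol(\mathcal B_1)=r^m\vol(\mathcal B_1)\,p^{\,n-k}\vol(\Lambda)/V$. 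Since $(p^k-1)p^{\,n-k}/(p^n-1)\to 1$ while $(p^k-1)/(p^n-1)=O(p^{\,k-n})$ absorbs the error $o((r/\beta)^m)=o(p^{\,n-k})$, the limit equals $\vol(\mathcal B_r)/(\zeta(m)V)$. Beyond the rigidity statement of the second paragraph, the only point that needs care is precisely this last estimate --- checking that the $o(\cdot)$ term indeed vanishes after multiplication by $(p^k-1)/(p^n-1)$ --- which is routine.
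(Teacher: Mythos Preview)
Your argument is correct and is essentially the paper's own proof: both split off the kernel contribution using the minimum-distance hypothesis, reduce via the balancedness identity $(p^k-1)/(p^n-1)$ to a sum over primitive points of the fixed lattice $\Lambda$, and then extract the $\zeta(m)^{-1}$ factor by M\"obius inversion. Two minor differences are worth noting. First, your ``rigidity statement'' (that, inside the ball, a vector is primitive in $M_{\mathcal C}$ iff it is primitive in $\Lambda$ and lies in $M_{\mathcal C}$) makes explicit the passage that the paper compresses into its step~(a); in fact the equivalence holds for \emph{all} $\xx$ with $\phi_p(\xx)\neq 0$, not just those in the ball, since your divisibility argument ($p\nmid d\Rightarrow \xx_0\in M_{\mathcal C}$; $p\mid d\Rightarrow \phi_p(\xx)=0$) never uses the norm bound---the bound is only needed to ensure that no short vector falls in $\ker\phi_p$. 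Second, you invoke the primitive-point asymptotics as a black box, whereas the paper unfolds the M\"obius inversion and passes to a Riemann integral; these are the same computation and your handling of the error term is fine.
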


A slightly stronger version of the above result is precisely stated in Theorem \ref{thm:AverageEnsembleGeneral}. We shall refer to a family of reductions that satisfies condition \eqref{eq:nonDeg}  as \textit{non-degenerate}. Non-degeneracy is indeed a very mild condition, and is satisfied, for instance, if $\Lambda_p$ has non-vanishing Hermite parameter (e.g., $\Lambda_p = p\mathbb{Z}^n$). Non-degenerate constructions immediately yield lattices satisfying the Minkowski-Hlawka lower bound (see the discussion in the end of Section \ref{sec:prelim}). By choosing specific suitable families of non-degenerate reductions, we can further improve this density and obtain a number of interesting corollaries. We highlight one of them:

\begin{cor}\label{cor:betterCyclotomic} Let $\mathcal{O}_K$ be the ring of integers of a degree $n$ number field $K$ containing $r(K)$ roots of unity. For any integer $t\geq 2$, there exists an $\mathcal{O}_K$-lattice with dimension $t$ and packing density 
	$$\Delta \geq \frac{r(K) t \zeta(tn)(1-\varepsilon)}{e(1-e^{-t}) 2^{tn}},$$
	for any $\varepsilon > 0$.
\end{cor}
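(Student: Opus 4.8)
The plan is to exhibit the lattice as a member of a non-degenerate ensemble of $\mathcal{O}_K$-lattices obtained from reductions modulo prime ideals, to apply Theorem~\ref{thm:2} (in the refined form of Theorem~\ref{thm:AverageEnsembleGeneral}), and then to extract the constants $r(K)$ and $t/(e(1-e^{-t}))$ from the roots-of-unity symmetry together with a refinement of the first-moment step in the spirit of \cite{Vance,Venkatesh}.

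First I would set up the ensemble. Embed the free module $\mathcal{O}_K^t$ into $\mathbb{R}^{tn}$ via a fixed rescaling of the canonical embedding; this is the ambient lattice $\Lambda$, of rank $m=tn$. By Chebotarev's theorem a positive-density set of rational primes $p$ carries a degree-one prime $\mathfrak{p}\mid p$ of $\mathcal{O}_K$, so that $\mathcal{O}_K/\mathfrak{p}\cong\mathbb{F}_p$; for such $p$ let $\phi_p:\mathcal{O}_K^t\to(\mathcal{O}_K/\mathfrak{p})^t=\mathbb{F}_p^t$ be coordinatewise reduction (for general $K$ one works with $\mathbb{F}_q$-linear codes over the residue field, which is exactly the scope of Theorem~\ref{thm:AverageEnsembleGeneral}). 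Since $\phi_p$ is $\mathcal{O}_K$-linear and every $\mathbb{F}_p$-subspace of $\mathbb{F}_p^t$ is an $\mathcal{O}_K/\mathfrak{p}$-submodule, each member $\beta\phi_p^{-1}(\mathcal{C})$ of $\mathbb{L}_p$ is an $\mathcal{O}_K$-lattice of rank $t$, with $\beta$ fixing the covolume at $V$; here $\mathcal{C}$ ranges over $k$-dimensional codes with $1\le k\le t-1$, which is why $t\ge 2$ is needed. To invoke the mean-value theorem I must check non-degeneracy: $\ker\phi_p=\mathfrak{p}\,\mathcal{O}_K^t$, and a nonzero $\alpha\in\mathfrak{p}$ has $\mathrm{N}\mathfrak{p}\mid|N_{K/\mathbb{Q}}(\alpha)|$, so by AM--GM $\|\sigma(\alpha)\|^2\gtrsim n\,|N(\alpha)|^{2/n}\ge n\,p^{2/n}$, whence $\lambda_1(\ker\phi_p)\gtrsim p^{1/n}=p^{\,t/(tn)}$ and \eqref{eq:nonDeg} holds. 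Theorem~\ref{thm:2} then gives $\lim_{p\to\infty}E_{\mathbb{L}_p}[\mathcal{N}_{\Lambda'}(r)]=\vol(\mathcal{B}_r)/(\zeta(tn)V)$.

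Next comes the roots-of-unity improvement. Multiplication by a root of unity $\omega\in\mathcal{O}_K^\times$ is an $\mathcal{O}_K$-module automorphism of every $L\in\mathbb{L}_p$; in the canonical embedding it acts by an isometry of $\mathbb{R}^{tn}$ (all archimedean absolute values of $\omega$ equal $1$) and preserves primitivity, and it acts freely on $L\setminus\{0\}$. Hence the primitive points of $L$ in $\mathcal{B}_r$ split into $\mu(\mathcal{O}_K)$-orbits of size exactly $r(K)$, so $\mathcal{N}_{\Lambda'}(r)/r(K)$ is a nonnegative-integer-valued random variable with limiting mean $\vol(\mathcal{B}_r)/(r(K)\zeta(tn)V)$. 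Already Markov's inequality then yields, for $\vol(\mathcal{B}_r)$ slightly below $r(K)\zeta(tn)V$ and $p$ large, an $\mathcal{O}_K$-lattice of rank $t$ and covolume $V$ with $\lambda_1>r$, hence packing density at least $r(K)\zeta(tn)(1-\varepsilon)/2^{tn}$ --- the Minkowski--Hlawka bound improved by the factor $r(K)$.

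The remaining factor $t/(e(1-e^{-t}))$ is where the real work lies, and it is the step I expect to be the main obstacle. Following Vance and Venkatesh, one does not ask the random lattice to avoid \emph{all} short primitive orbits; instead one counts the rank-one $\mathcal{O}_K$-submodules generated by short vectors --- equivalently, the lines of $\mathbb{F}_p^t$ spanned by the images under $\phi_p$ of the short primitive points of $\mathcal{O}_K^t$ --- keeps the factor $r(K)$ coming from the unit multiples inside each submodule, and then pushes the radius to the true threshold of the event ``the random code $\mathcal{C}$ meets none of these lines''. Controlling this threshold in the limit $p\to\infty$ is delicate because the short lines cluster inside the low-dimensional subspaces of $\mathbb{F}_p^t$: a second-moment / inclusion--exclusion estimate of this clustering, whose effectiveness is governed by the code length $t$ and which requires choosing the code dimension $k$ accordingly, is what is needed to show the event is non-null for $\vol(\mathcal{B}_r)$ as large as $\tfrac{t}{e(1-e^{-t})}\,r(K)\,\zeta(tn)\,V$. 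The constant $t/(e(1-e^{-t}))$ then emerges from the resulting optimization --- the supremum of an elementary expression in which $t$ enters through the number of coordinates and $e$ through the Poisson-type limiting distribution of the count. Granting this, dividing by $2^{tn}$ and absorbing into $(1-\varepsilon)$ both the passage to the limit $p\to\infty$ and the restriction of $p$ to a set of positive density gives the claimed density.
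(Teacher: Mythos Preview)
Your first two paragraphs are fine and match the paper: the ensemble from Definition~\ref{def:con2}, the non-degeneracy check via $|N_{K/\mathbb{Q}}(\alpha)|\ge p$ for $\alpha\in\mathfrak{p}$, and the $r(K)$ gain from free isometric action of roots of unity are exactly what the paper does.

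The third paragraph is a genuine gap. The factor $t/(e(1-e^{-t}))$ is \emph{not} obtained in the paper by any second-moment or inclusion--exclusion argument on lines in $\mathbb{F}_p^t$, nor from a Poisson-type limit; it comes from a purely first-moment application of Theorem~\ref{thm:AverageEnsembleGeneral} to a carefully chosen test function (Rogers' function), followed by a successive-minima argument. Concretely, one takes
\[
f(\yy)=\begin{cases}
1/n & \|\yy\|\le r e^{(1-t)/tn},\\
\tfrac{1}{nt}-\log(\|\yy\|/r) & r e^{(1-t)/tn}\le\|\yy\|\le r e^{1/tn},\\
0 & \text{otherwise},
\end{cases}
\]
whose integral equals $e(1-e^{-t})r^{nt}\,\vol\,\mathcal{B}_1/(nt)$; this is where the constant $e(1-e^{-t})$ enters. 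Choosing $r$ so that the integral equals $r(K)V\zeta(nt)(1-\varepsilon)/n$, Theorem~\ref{thm:AverageEnsembleGeneral}(ii) produces an $\mathcal{O}_K$-lattice $\Lambda_1$ with $\sum_{\yy\in\Lambda_1'}f(\yy)<r(K)/n$. Now pick $K$-linearly independent $v_1,\dots,v_t$ achieving the successive minima $\lambda_i^K$ over $K$; since each $v_i$ has $r(K)$ unit multiples among the primitive points, the sum is at least $r(K)\sum_{i=1}^t f(\beta\sigma(v_i))$. The logarithmic form of $f$ then converts the inequality $\sum_i f(\beta\sigma(v_i))<1/n$ into $\sum_i\log(\beta\lambda_i^K/r)>0$, i.e.\ a lower bound on $\prod_i\lambda_i^K$, hence on the geometric mean of the successive densities $\Delta_i^K$. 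Finally, a standard transformation (\cite[Thm.~2.2]{Vance}) turns a lattice with good $\bigl(\prod\Delta_i^K\bigr)^{1/t}$ into one with the same bound on $\Delta_1$.

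So the missing idea is twofold: (i) replace the ball indicator by Rogers' logarithmic test function, which is what produces the $e(1-e^{-t})/t$ in the integral; and (ii) control not the packing radius directly but the product of the $t$ successive $K$-minima, and only then pass to packing density via a linear transformation. Your proposed route through avoidance of short lines in $\mathbb{F}_p^t$ is a different strategy, and as you yourself flag, you have not shown it yields this specific constant; the paper's argument avoids that difficulty entirely by staying at the first-moment level.
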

This proves for instance, the existence of ideal lattices in any dimension with density better, by a linear factor, than the Minkowski-Hlawka lower bound. This also recovers,
for $t=2$ and a judicious choice of number field and degree, the density in \cite[Thm. 1]{Venkatesh} and \cite[Thm. 2]{Moustrou}. By allowing reductions to codes over matrix rings (rather than the field $\mathbb{F}_p$), we provide, in Section \ref{sec:redRings}, a coding-theoretic proof of the existence of dense Hurwitz lattices, as in \cite{Vance}.

Here is how Theorem 1 may be interpreted: the density of the kernel (coarse) lattice $\Lambda_p$ is improved by ``adjoining'' a code $\mathcal{C}$ to it, through the reduction $\phi_p$. Now if $\Lambda_p$ itself has a reasonable density, we can improve it up to the Minkowski-Hlawka bound. Building on this idea, we show in Section \ref{sec:conc} that if we start from a suitable reduction so that the base (fine) lattice $\Lambda$ is not so thick (in terms of its \textit{covering density}) and such that the kernels are not so sparse (in terms of \textit{packing density}), we can bound the required size of $p$ to be within a finite range. For instance, we show that by starting from the family of Craig's lattices \cite{Splag}, we can build dense lattices from codes with alphabet-size $p = O((n\sqrt{\log n})^{1+\nu})$, where $\nu > 0$ is any (small) positive constant. This improves significantly the size of codes required by usual constructions \cite{Rush1989} \cite{Loeliger} (where $p$ grows at least as $\omega(n^{3/2})$). As observed in \cite[pp. 18-19]{Splag}, the works of Rush (and Loeliger) already significantly reduce the family sizes of typical proofs of the Minkowski-Hlawka lower bound. It is worth mentioning that, in terms of absolute family size, the best result is achieved by \cite{Zemor} by restricting the average to double-circulant codes or \cite{Moustrou} using cyclotomic lattices. For instance, while the logarithm of the search space for Craig's lattices reductions has size $n^2 \log \log n$, the family based on double-circulant codes has size $n \log n$ (we refer the reader to Table \ref{tab:comparison} for more details). We leave it as an open question to whether coupling a good reduction with a smaller family of codes can further reduce the overall search space.
\\[1\baselineskip]
\noindent \textbf{Organization. } This work is organized as follows. In Section \ref{sec:prelim} we describe some basic definitions and notation. In Section \ref{sec:genRed} we establish our main result on general reductions and several corollaries. In Section \ref{sec:numbFields} we consider reductions induced by quotients of ideals in the ring of integers of a number field, proving the main corollaries.  In Section \ref{sec:redRings} we construct random Hurwitz and Lipschitz lattices from codes over matrix rings. In Section \ref{sec:conc}, we discuss an ``algorithmic'' version of the main theorem and draw the final conclusions.




\section{Preliminaries and Notation}
\label{sec:prelim}
The Euclidean norm of $\xx \in \mathbb{R}^m$ is denoted by $\left\|\xx \right\| = (x_1^2+\ldots+x_m^2)^{1/2}$. The ball of radius $r$ in $\mathbb{R}^m$ is denoted by $\mathcal{B}_r = \left\{\xx \in \mathbb{R}^m: \left\|\xx \right\| \leq r \right\}$.
A \textit{lattice} $\Lambda$ is a discrete additive subgroup of $\mathbb{R}^m$. Denote by $\mbox{span } \Lambda$ the minimal subspace of $\mathbb{R}^m$ that contains $\Lambda$.  The \textit{rank} of $\Lambda$ is defined to be the dimension of $\mbox{span } \Lambda$. The quotient $(\mbox{span } \Lambda)/ \Lambda$ is compact, and its volume, denoted by $V(\Lambda)$, is said to be the \textit{volume of }$\Lambda$. The \textit{first minimum} $\lambda_1(\Lambda)$ of $\Lambda$ is the shortest Euclidean norm of non-zero vectors in $\Lambda$.  In general, the $i$-th minima of $\Lambda$ are defined as
\begin{equation*}
\lambda_i(\Lambda) = \min\left\{r : \dim\left( \mbox{span}\left\{\mathcal{B}_r \cap \Lambda\right\}\right) = i\right\}.
\end{equation*} The \textit{packing density} of a rank $m$ lattice $\Lambda$ is defined as $$\Delta(\Lambda) = \frac{\mbox{vol }\mathcal{B}_{\lambda_1/2}}{V(\Lambda)}.$$
We say that a point in $\xx \in \Lambda$ is \textit{primitive} if the intersection between $\Lambda$ and the open line segment $\left\{\alpha \xx : \alpha \in (0,1)\right\}$ is the empty set. The set of all primitive points in $\Lambda$ is denoted by $\Lambda^\prime$.

Theorem \ref{thm:2} implies the Minkowski-Hlawka lower bound in the following  fashion. From the average result, it follows that it must exist at least one $\Lambda \in \mathbb{L}$ such that $\mathcal{N}_{\Lambda}(r) \leq (\zeta(m)V)^{-1} \text{vol } \mathcal{B}_r^m$. Now if we force the right-hand side to be equal to $2(1-\varepsilon)$, for some small $\varepsilon >0$, then, since a lattice has at least two minimum vectors, we must have $\mathcal{N}_{\Lambda}(r) = 0$. Therefore $\Lambda$ can pack balls of radius $r/2$; rearanging the terms gives us, up to $\varepsilon$, the Minkowski-Hlawka bound. If $\Lambda$ is a lattice with guaranteed number of minimum vectors (say, $L$) we can, by similar arguments, achieve density $L(1-\varepsilon)/2^{m}$.

A $k$-dimensional vector subspace $\mathcal{C} \subset \mathbb{F}_p^n$ is called a (linear) \textit{code} with parameters $(n,k,p)$ (or simply an $(n,k,p)$-code).

Throughout the paper we use the standard ``big-O'', ``big-omega'' and ``little-omega''  notations, e.g. $f(x) = \Omega(g(x))$ if $\lim \sup_{x\to \infty} |g(x)/f(x)| < +\infty$, and $f(x) = \omega(g(x))$ if $\lim_{x \to \infty} g(x)/f(x) = 0$.
\section{Generalized Reductions}
\label{sec:genRed}
From now on, let $\Lambda$ be a rank $m$ lattice and let $n \leq m$ be an integer. 
\begin{defi} Let $\phi_p:\Lambda\to \mathbb{F}_p^n$ be a surjective homomorphism. Given a linear code $C$, its associated lattice via $\phi_p$ is defined as $\Lambda_p(C) \triangleq \phi_p^{-1}(C)$.
	\label{def:ConsGen}
\end{defi} 
A surjective homomorphism as in the above definition will, from now on, be called a \textit{reduction}. We shall see that $\Lambda_p(C)$ is indeed a lattice of rank $m$. First observe that $\Lambda_p(C)$ is a subgroup of $\Lambda_p(\mathbb{F}_p^n)=\Lambda$. Since the quotient $\Lambda/\text{ker}(\phi_p) \simeq \mathbb{F}_p^n$ is finite, $\text{ker}(\phi_p)=\Lambda_p(\left\{\mathbf{0}\right\})\triangleq\Lambda_p$ is a sub-lattice of $\Lambda$, of rank $m$. From the inclusion $\Lambda_p \subset \Lambda_p(C) \subset \Lambda$, we conclude that the three lattices have the same rank. Moreover, $\Lambda_p(C)/\Lambda_p \simeq C$, and therefore $V( \Lambda_p(C)) = \left|C\right|^{-1} p^{n} V( \Lambda)$.
\begin{rmk} There is an off-topic connection between Definition \ref{def:ConsGen} and combinatorial tilings. If in addition to being surjective, the reduction $\phi_p$ is a bijection when restricted to a set $\mathcal{P} \subset \Lambda$ of cardinality $p^n$, then $\mathcal{P}$ tiles $\Lambda$ by translations of vectors of $\Lambda_p$.
\end{rmk}
This framework contains classical Construction A \cite{Splag}, \cite{Rush1989}, the constructions in \cite{KosiOngOggier}, and \cite{ConsAOggier}. We derive sufficient conditions for this general construction to admit a Minkowski-Hlawka theorem. Set $\beta= {V^{1/m}}/{(p^{n-k}V(\Lambda)^{1/m})}$ and let
\begin{equation}
\mathbb{L}_p=\left\{ \beta \Lambda_p(C) : C \mbox{ is an } (n,k,p)-\mbox{code}\right\}
\label{eq:ensemble}
\end{equation}
be the ensemble of all lattices associated to codes of dimension $k$, normalized to volume $V$. Suppose that a lattice in $\mathbb{L}_p$ is picked at random by choosing $C$ uniformly. We shall prove a generalized version of the Minkowski-Hlawka theorem for $\mathbb{L}_p$. Instead of functions with bounded support, we will consider a wider class of functions. Let $W=\mbox{span}(\Lambda)$ be the minimal subspace of $\mathbb{R}^n$ containing $\Lambda$ (therefore, $\mbox{dim}( \mbox{span}(\Lambda_p(C))) = m$). 
\begin{defi} Let $f:W\to \mathbb{R}$ be a Riemann-integrable function. We say that $f$ is \textit{semi-admissible} if 
	\begin{equation}
	|f(\xx)| \leq \frac{b}{(1+\left\| \xx \right\|)^{m+\delta}}, \forall \xx \in W
	\end{equation}
	where $b > 0$ and $\delta > 0$ are positive constants.
\end{defi}
Any bounded integrable function with compact support is semi-admissible. Of particular interest are indicator functions of bounded convex sets. Notice that, for any semi-admissible function and a rank-$m$ lattice $\Lambda$ in $W$,
$$\sum_{\xx \in \Lambda} {f(\xx)} < +\infty.$$

\begin{rmk} If $f$ and its Fourier transform $\hat{f}$ are semi-admissible, then $f$ is said to be admissible. In this paper we will not be concerned about admissible functions, which play an important role in the development of the so-called linear programming bounds for packings.
\end{rmk}

\begin{thm} Let $(p_j)_{j=1}^{\infty}$ be an increasing sequence of prime numbers such that there exist reductions $\phi_{p_j}:\Lambda\to\mathbb{F}_{p_j}^n$ and let $f:W\to\mathbb{R}$ be a semi-admissible function. If the first minimum of $\Lambda_{p_j}=\Lambda_{p_j}(\left\{0\right\})$ satisfies 
	$$\lambda_1(\Lambda_{p_j}) \geq c p_{j}^{\frac{n-k}{m}+\alpha},$$ for some constant $c,\alpha>0$, then
	
\noindent (i) \begin{equation}
			\lim_{p_j\to\infty} \mathbb{E}_{\mathbb{L}_{p_j}}\left[\sum_{\xx \in \beta \Lambda_{p_j}(C)\nozero} f(\xx) \right] = V^{-1}\int_{W} f(\xx)\text{d}\xx,
			\label{eq:averageBehaviorNoTheta}
			\end{equation}
(ii) \begin{equation}
	\lim_{p_j\to\infty} \mathbb{E}_{\mathbb{L}_{p_j}}\left[\sum_{\xx \in \beta \Lambda_{p_j}^{\prime}(C)} f(\xx) \right] = (\zeta(m)V)^{-1}\int_{W} f(\xx)\text{d}\xx,
	\label{eq:averageBehavior}
	\end{equation}
	\label{thm:AverageEnsembleGeneral}where the averages are taken over all $\beta\Lambda_{p_j}(\mathcal{C})$ in the ensemble $\mathbb{L}_{p_j}$ (Equation \eqref{eq:ensemble}).
\end{thm}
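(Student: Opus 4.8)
The plan is to turn the ensemble average into a purely deterministic quantity by linearity of expectation, and then to extract the limit from classical Riemann-sum asymptotics for the two \emph{fixed} lattices $\Lambda$ and $\Lambda_{p_j}$. The only probabilistic ingredient is that, for $\mathcal C$ uniform among the $k$-dimensional subspaces of $\mathbb F_p^n$, the $\mathrm{GL}_n(\mathbb F_p)$-invariance of this law gives $\Pr[\mathbf v\in\mathcal C]=(p^k-1)/(p^n-1)=:\rho_p$ for every nonzero $\mathbf v$. Splitting $\Lambda_p(\mathcal C)\setminus\{0\}$ into the fibre $\Lambda_p\setminus\{0\}$ over $0$ and the nonzero fibres indexed by $\mathcal C\setminus\{0\}$, and taking expectations (the two $g(0)$ terms cancel in the relevant difference), one obtains for every semi-admissible $g:W\to\mathbb R$
\begin{equation*}
\mathbb E_{\mathbb L_p}\Big[\sum_{\xx\in\beta\Lambda_p(\mathcal C)\setminus\{0\}}g(\xx)\Big]=(1-\rho_p)\sum_{\xx\in\beta\Lambda_p\setminus\{0\}}g(\xx)+\rho_p\sum_{\xx\in\beta\Lambda\setminus\{0\}}g(\xx).
\end{equation*}
Applying this with $g$ replaced by $\xx\mapsto g(d\xx)$, $d\ge1$, and summing the resulting identities against $\mu(d)$ — which is legitimate since each inner sum is $O_\Lambda\big((d\beta)^{-m}\big)$ and $\sum_d d^{-m}=\zeta(m)<\infty$ — the identity $\sum_{d\mid e}\mu(d)=\mathbf 1_{\{e=1\}}$ collapses both sides to sums over primitive vectors (using also $\beta\Lambda'=(\beta\Lambda)'$), giving the same identity with $\beta\Lambda_p(\mathcal C)\setminus\{0\}$, $\beta\Lambda_p\setminus\{0\}$, $\beta\Lambda\setminus\{0\}$ replaced by $\beta\Lambda_p'(\mathcal C)$, $(\beta\Lambda_p)'$, $(\beta\Lambda)'$ and $g$ by $f$.

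Next I would prove the two asymptotics that feed these identities. (1) \emph{The kernel term vanishes.} From $\beta^m=V/(p^{n-k}V(\Lambda))$ and $\lambda_1(\Lambda_{p_j})\ge c\,p_j^{(n-k)/m+\alpha}$ we get $\lambda_1(\beta\Lambda_{p_j})=\beta\lambda_1(\Lambda_{p_j})\ge c'\,p_j^{\alpha}\to\infty$, so the elementary shell bound $\#\big(L\cap\mathcal B_R\setminus\{0\}\big)\le(3R/\lambda_1(L))^m$ (any rank-$m$ lattice $L$ and $R\ge\lambda_1(L)$, by disjointness of the balls of radius $\lambda_1/2$ about lattice points) together with $|f(\xx)|\le b(1+\|\xx\|)^{-(m+\delta)}$ yields $\sum_{\xx\in\beta\Lambda_{p_j}\setminus\{0\}}|f(\xx)|\le C_m\,\lambda_1(\beta\Lambda_{p_j})^{-(m+\delta)}\to0$, hence also $\sum_{\xx\in(\beta\Lambda_{p_j})'}|f(\xx)|\to0$. (2) \emph{A Riemann-sum lemma.} For a fixed rank-$m$ lattice $\Lambda$ and semi-admissible $g$, $\beta^m\sum_{\xx\in\Lambda\setminus\{0\}}g(\beta\xx)\to V(\Lambda)^{-1}\int_W g$ as $\beta\to0^+$; split at $\|\beta\xx\|\le R$ and $\|\beta\xx\|>R$: on the bounded part this is the usual convergence of the Riemann sums of the Riemann-integrable, compactly supported function $g\,\mathbf 1_{\mathcal B_R}$ over the refining mesh $\beta\Lambda$, and the tail is $\le b\beta^{-\delta}\sum_{\xx\in\Lambda,\,\|\xx\|>R/\beta}\|\xx\|^{-(m+\delta)}=O_\Lambda(R^{-\delta})$ uniformly in small $\beta$ by the same shell bound. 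Carrying this through the Möbius expansion, with the $\beta$-uniform domination $\beta^m\sum_{\xx\ne0}|g(d\beta\xx)|=\beta^m\sum_{\yy\in d\beta\Lambda\setminus\{0\}}|g(\yy)|\le C_\Lambda d^{-m}$, upgrades it to the primitive form $\beta^m\sum_{\xx\in\Lambda'}g(\beta\xx)\to(\zeta(m)V(\Lambda))^{-1}\int_W g$.

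It remains to assemble. Since $\rho_p\to0$ but $\rho_p\beta^{-m}=(p^n-p^{n-k})V(\Lambda)/\big((p^n-1)V\big)\to V(\Lambda)/V$ (here $k<n$, so $\beta\to0$; for $k=n$ the ensemble is a single lattice and there is nothing to prove), feeding (1) and (2) into the two identities makes the kernel terms disappear and turns the $\Lambda$-terms into $(V(\Lambda)/V)\cdot V(\Lambda)^{-1}\int f=V^{-1}\int_{\mathbb R^m}f$ and $(V(\Lambda)/V)\cdot(\zeta(m)V(\Lambda))^{-1}\int f=(\zeta(m)V)^{-1}\int_{\mathbb R^m}f$, which are (i) and (ii). The hard part is the Möbius/limit interchange underlying the primitive case: the naive majorant $b\|\yy\|^{-(m+\delta)}$ is badly lossy near the origin, where $d\beta\Lambda$ carries $\asymp(d\beta)^{-m}$ points, and it produces a spurious factor $\beta^{-\delta}$ that blows up against $\rho_p$; the fix is the sharp, $\beta$-uniform estimate $\sum_{\yy\in d\beta\Lambda\setminus\{0\}}|f(\yy)|\le C_\Lambda(d\beta)^{-m}$, obtained by majorizing $|f|$ by a radial non-increasing function, tiling by the Voronoi cells of $d\beta\Lambda$ when that lattice is fine ($d\beta\,\mu(\Lambda)\le1$) and reverting to the shell bound when it is coarse — this is precisely what makes the dominated-convergence step in $d$ valid. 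Everything else (the probability count, convergence of Riemann sums, the shell estimates, the Möbius identities) is routine.
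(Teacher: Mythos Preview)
Your proposal is correct and follows essentially the same approach as the paper: both split the ensemble sum into a kernel contribution (points mapping to $0$) that vanishes thanks to the hypothesis $\lambda_1(\Lambda_{p_j})\ge c\,p_j^{(n-k)/m+\alpha}$ and semi-admissibility, and a non-kernel contribution handled via Loeliger's balancedness identity $\Pr[\mathbf v\in\mathcal C]=(p^k-1)/(p^n-1)$, with the primitive version in (ii) obtained through M\"obius inversion and the limit identified as a Riemann integral.

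The organizational difference is minor but worth noting: the paper starts directly with the primitive sum over $\Lambda_{p_j}'(C)$, splits it, and uses M\"obius inversion \emph{after} Loeliger to expand $\sum_{\Lambda'}$ as $\sum_r \mu(r)r^{-m}\cdot r^m\sum_{\Lambda\setminus\{0\}}$; you instead derive the clean non-primitive identity first (which already gives (i)) and then M\"obius-invert the \emph{whole identity} to pass to primitive points on all three lattices simultaneously. Your route sidesteps a point the paper leaves implicit --- that for $\xx\in\Lambda$ with $\phi_{p_j}(\xx)\neq 0$, primitivity in $\Lambda_{p_j}(C)$ coincides with primitivity in $\Lambda$ (needed to apply Loeliger at step~(a)) --- and you are more explicit about the $\beta$-uniform domination $\sum_{\yy\in d\beta\Lambda\setminus\{0\}}|f(\yy)|\le C_\Lambda(d\beta)^{-m}$ that justifies exchanging the limit with the M\"obius sum, which the paper dispatches with a one-line ``by dominated convergence''.
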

\begin{proof} We will prove the ``refined'' statement (ii). The proof of (i) is similar, except for the last step. Recall that the set of $\mathcal{C}_{n,k}$ of all $(n,k)$-codes satisfies Loeliger's balancedness equation \cite{Loeliger} 
	\begin{equation}
	\frac{1}{| \mathcal{C}_{n,k}|} \sum_{{C} \in \mathcal{C}_{n,k}} \sum_{c \in {C} \backslash \left\{\mathbf{0}\right\}}  g(c)= \frac{p_j^{k}-1}{p_j^{n}-1} \sum_{v \in  \mathbb{F}_{p_j}^n \backslash \left\{\mathbf{0}\right\}}g(v),
	\label{eq:LoeligerLemma}
	\end{equation}
	for a function
	$g:\mathbb{F}_p^n \to \mathbb{R}$. Now for $f:W\to \mathbb{R}$, \begin{equation}
	\begin{split} &\mathbb{E}\left[\sum_{\xx \in \beta\Lambda_{p_j}^{\prime}(C)} f(\xx) \right] = \mathbb{E}\left[\sum_{{\xx \in \beta\Lambda_{p_j}^{\prime}(C)}\above 0pt {\phi_{p_j}(\xx/\beta)=0}} f( \xx) \right] \\ +& \mathbb{E}\left[\sum_{{\xx \in \beta\Lambda_{p_j}^\prime(C)}\above 0pt {\phi_{p_j}(\xx/\beta)\neq0}} f(\xx) \right]. \\
	\end{split}
	\label{eq:splitAverage}
	\end{equation}
	From the assumption on $f$, 
	\begin{equation*}\begin{split} &\left|\sum_{{\xx \in \Lambda_{p_j}^{\prime}(C)}\above 0pt {\phi_{p_j}(\xx)=0}} f(\beta \xx)\right| =\left|\sum_{{\xx \in \Lambda_{p_j}^{\prime}}} f(\beta \xx)\right| \\ 
	&\leq \sum_{{\xx \in \Lambda_{p_j}\backslash\left\{\mathbf{0}\right\}}} \frac{b}{(1+\left\| \beta \xx \right\|)^{m+\delta}}.
	\end{split}
	\end{equation*}
	Since the lattice $\Lambda_p(C)$ has rank $m \geq 1$, the series on the right-hand-side of the above inequality is absolutely convergent for any $p_j$. Moreover, since, by assumption $$\left\| \beta \mathbf{x} \right\| \geq \beta \lambda_1(\Lambda_{p_j})\geq c V^{1/m} (V(\Lambda))^{-1/m} p_j^{\alpha} \to \infty,$$ each individual term of  the last sum tends to zero, as $p_j \to \infty$ and, by dominated convergence, the sum tends to zero. Let $\gamma = (p_j^k-1)/(p_j^n-1)$. For the second term of Equation \eqref{eq:splitAverage}, we have:
	\begin{equation*}
	\begin{split} &\mathbb{E}\left[\sum_{{\xx \in \beta\Lambda_{p_j}^{\prime}(C)}\above 0pt {\phi_{p_j}(\xx)\neq0}} f(\beta \xx) \right]  \stackrel{(a)}{=} \gamma \sum_{\xx \in \Lambda^{\prime}}f(\beta\xx)  \\
&	\stackrel{(b)}{=}  \sum_{r = 1}^{\infty} \frac{\mu(r)}{r^m} \sum_{\xx \in \Lambda\nozero}r^{m} \gamma f(r\beta \xx)
	\end{split}
	\end{equation*}
 where $\mu$ denotes the M\"obius function (see, e.g. \cite[Sec. VI. 3.2]{Cassels}). In the above, (a) follows from \eqref{eq:LoeligerLemma} and (b) is the M\"obius function inversion formula (see, e.g., \cite[Sec. VI. 3.2]{Cassels}). The theorem follows by using the property 
 $$\sum_{r=1}^{\infty}\frac{\mu(r)}{r^m} = \frac{1}{\zeta(m)}$$
 and observing that for the inner sum satisfies
$$\lim_{p_j \to \infty} \sum_{\xx \in \Lambda\nozero}r^{m} \gamma f(r\beta \xx) = (V^{-1}\zeta(m))\int_{W} f(\xx) \mbox{d} \xx,$$
by the definition of Riemann integral. Exchanging the limit and the sum is justified by dominated convergence, given the condition on $f$.

\end{proof}


\begin{exe} If $\Lambda=\mathbb{Z}^n$ and $\phi_p$ is the reduction modulo $p$, we obtain mod-$p$ lattices as in \cite{Loeliger}. It is clear that $\Lambda_p = p\mathbb{Z}^n$ satisfies the hypothesis of Theorem \ref{thm:AverageEnsembleGeneral}, with $m=n$ and $\alpha = k/n$. This implies Theorem 1 of \cite{Loeliger}.
\end{exe}

\begin{defi} A sequence of surjective homomorphisms $(\phi_j)_{j=1}^{\infty}	$, $\phi_j:\Lambda \to \mathbb{F}_{p_j}^n$ is said to be \textit{non-degenerate} if 
	$$\lambda(\Lambda_{p_j}) \geq c p_{j}^{\frac{n}{m}},$$
	for some constants $c > 0$. Similarly, the sequence of associated ensembles (Equation \ref{eq:ensemble}) are said to be non-degenerate.
\end{defi}
It follows that if the reductions are non-degenerate, the associated ensemble admits the Minkowski-Hlawka theorem. 

\begin{exe}[``Natural reduction''] If $m=n$, the natural reduction to $\mathbb{F}_p^n$ is as follows. Given a basis $\xx_1,\ldots,\xx_n$ for $\Lambda$, take $\phi_p$ to be the linear map defined by $\phi_p(\xx_i) = \bm{e}_i \in \mathbb{F}_p^n$, where $\mathbf{e}_i$ the $i$-th canonical vector $(0,\ldots,0,1,0,\ldots,0)$. It is clear that $\phi$ is surjective and $\ker \phi_p = p \Lambda$, therefore the associated sequence of reductions is non-degenerate. This provides a systematic way of constructing good sublattices of a given lattice.
	\label{ex:naturalReduction}
\end{exe}
Taking $f(x)$ to be the indicator function of a ball in part (ii) of Theorem \ref{thm:AverageEnsembleGeneral}, we recover Theorem \ref{thm:2}. Another function of interest is $f(x)= e^{-\tau \left\|x\right\|^2}$ for $\tau > 0$, yielding the \textit{theta series}
$$\Theta_{\Lambda}(\tau) = \sum_{x \in \Lambda} e^{-\tau \left\|x\right\|^2}.$$
A corollary of part (i) of the theorem is the following:
\begin{cor} The average theta series of a sequence of non-degenerate ensemble satisfies
	\begin{equation}
	\lim_{p_j \to \infty} E_{\mathbb{L}_{p_j}}\left[\Theta_{\Lambda}(\tau)\right] = V^{-1} \left(\frac{\pi}{\tau}\right)^{m/2} + 1.
	\end{equation}
	\label{cor:thetaSeries}
\end{cor}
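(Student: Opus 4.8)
The plan is to obtain this as a direct specialization of part (i) of Theorem~\ref{thm:AverageEnsembleGeneral} to the Gaussian weight $f(\xx) = e^{-\tau\|\xx\|^2}$, $\tau > 0$. First I would verify that this $f$ is semi-admissible: for every $\delta > 0$ the continuous function $t \mapsto e^{-\tau t^2}(1+t)^{m+\delta}$ on $[0,\infty)$ tends to $0$ as $t\to\infty$ (super-polynomial decay of the Gaussian), hence is bounded by some constant $b>0$, which gives $|f(\xx)| \le b(1+\|\xx\|)^{-(m+\delta)}$ on $W$. In particular $\sum_{\xx\in\Lambda}f(\xx)<\infty$, so the theta series $\Theta_{\Lambda}(\tau)$ of every lattice in the ensemble is well defined. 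For a non-degenerate ensemble one has $\lambda_1(\Lambda_{p_j}) \ge c\,p_j^{n/m}$, and since $n/m = (n-k)/m + k/m$ with $k/m>0$, the growth hypothesis of Theorem~\ref{thm:AverageEnsembleGeneral} is met with $\alpha = k/m$.

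Granting these, part (i) of the theorem yields
$$\lim_{p_j\to\infty} \mathbb{E}_{\mathbb{L}_{p_j}}\left[\sum_{\xx \in \beta\Lambda_{p_j}(\mathcal{C})\nozero} e^{-\tau\|\xx\|^2}\right] = V^{-1}\int_{\mathbb{R}^m} e^{-\tau\|\xx\|^2}\,\mathrm{d}\xx = V^{-1}\left(\frac{\pi}{\tau}\right)^{m/2},$$
where the value of the integral follows from the standard one-dimensional Gaussian integral applied coordinate-wise together with the substitution rescaling each axis by $\tau^{1/2}$.

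It then remains only to account for the origin. For any member $\beta\Lambda_{p_j}(\mathcal{C})$ of the ensemble,
$$\Theta_{\beta\Lambda_{p_j}(\mathcal{C})}(\tau) = e^{0} + \sum_{\xx \in \beta\Lambda_{p_j}(\mathcal{C})\nozero} e^{-\tau\|\xx\|^2} = 1 + \sum_{\xx \in \beta\Lambda_{p_j}(\mathcal{C})\nozero} e^{-\tau\|\xx\|^2}.$$
Since the constant term $1$ is unchanged under the expectation over $\mathbb{L}_{p_j}$, taking $\mathbb{E}_{\mathbb{L}_{p_j}}[\,\cdot\,]$ and then letting $p_j\to\infty$ and combining with the previous display gives $\lim_{p_j\to\infty}\mathbb{E}_{\mathbb{L}_{p_j}}[\Theta_{\Lambda}(\tau)] = 1 + V^{-1}(\pi/\tau)^{m/2}$, which is the claim. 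I do not expect a genuine obstacle here; the statement is essentially a reading-off of Theorem~\ref{thm:AverageEnsembleGeneral}(i), and the only points needing a line of justification are the (routine) semi-admissibility of the Gaussian and the bookkeeping of the $\xx=0$ term that the theorem's sum omits but the theta series includes.
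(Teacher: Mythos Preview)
Your proposal is correct and follows exactly the route the paper intends: the corollary is stated immediately after the sentence ``A corollary of part (i) of the theorem is the following,'' with no separate proof, so applying Theorem~\ref{thm:AverageEnsembleGeneral}(i) to $f(\xx)=e^{-\tau\|\xx\|^2}$, computing the Gaussian integral, and restoring the $\xx=\mathbf{0}$ term is precisely what is meant. Your check of semi-admissibility and of the exponent $\alpha=k/m>0$ under non-degeneracy are the only details left implicit in the paper, and you have handled them correctly.
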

Corollary \ref{cor:thetaSeries} can, for instance, be applied to the construction of sufficiently flat Gaussian measures for secure communications (cf \cite{LLBS_12}).

\begin{rmk} The condition for non-degeneracy can be re-written as
	$$\liminf_{p_j\to\infty} \gamma(\Lambda_{p_j})  > 0,$$
	where $\gamma(\Lambda) = \lambda(\Lambda)/V(\Lambda)^{1/m}$ is the Hermite parameter of $\Lambda$. In other words, non-degeneracy is equivalent to non-vanishing Hermite parameter of the sequence of kernel lattices. 
\end{rmk}

%
%
We close this section with another consequence of Theorem \ref{thm:AverageEnsembleGeneral}. We shall refer to each ratio
\begin{equation}
\Delta_i(\Lambda) = \frac{\mbox{vol } \mathcal{B}_{\lambda_i/2}}{V(\Lambda)}, i = 1, \ldots, m,
\label{eq:successiveDensities}
\end{equation}
as the $i$-th successive density of a lattice $\Lambda$.
For a sequence of non-degenerate ensemble, put $\mathbb{L} = \bigcup_{j=1}^{\infty}\mathbb{L}_{p_j}$.
\begin{cor}
	For any $\varepsilon > 0$, there exists $\Lambda_{p_j} \in \mathbb{L}$ such that 
	\begin{equation}
	\prod_{i=1}^m \Delta_i(\Lambda)^{1/m} \geq \frac{2 m \zeta(m)(1-\varepsilon)}{e(1-e^{-m})}.
	\end{equation}
\end{cor}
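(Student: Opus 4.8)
The plan is to apply part (ii) of Theorem~\ref{thm:AverageEnsembleGeneral} to a single, carefully chosen radial test function and then to recover $\prod_i\lambda_i$ from a layer-cake identity. First I would fix the threshold $\rho_0=1/e$ and take
$$f(\xx)\;=\;\min\bigl\{\,1,\,(-\log\|\xx\|)_+\,\bigr\},\qquad (t)_+:=\max\{t,0\},$$
which is bounded with support contained in $\mathcal B_1$, hence semi-admissible, and satisfies $\int_W f(\xx)\,\mathrm d\xx=\frac{1-e^{-m}}{m}\,\mathrm{vol}(\mathcal B_1)$ by a one-line computation (integrate $-\log\|\xx\|$ over the annulus $\rho_0\le\|\xx\|\le1$ and add $\mathrm{vol}(\mathcal B_{\rho_0})$). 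After rescaling the ensemble — harmless, since $\prod_i\Delta_i(\Lambda)^{1/m}=\mathrm{vol}(\mathcal B_1)\prod_i\lambda_i(\Lambda)/(2^mV)$ is scale-invariant — I would arrange the common volume $V$ so that the right-hand side of \eqref{eq:averageBehavior} equals $2-\delta$ for a small $\delta>0$; then, the reductions being non-degenerate, Theorem~\ref{thm:AverageEnsembleGeneral}(ii) produces, for all large $p_j$, a lattice $\Lambda\in\mathbb L_{p_j}$ with $\sum_{\xx\in\Lambda'}f(\xx)<2$.

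Next I would extract the successive minima from this one inequality. Since $f\equiv1$ on $\mathcal B_{\rho_0}$, we get $\#(\Lambda'\cap\mathcal B_{\rho_0})\le\sum_{\xx\in\Lambda'}f(\xx)<2$; this count is a nonnegative \emph{even} integer (primitive points come in antipodal pairs), hence it is $0$, so $\lambda_1(\Lambda)>\rho_0=1/e$. Consequently $f$ coincides with $(-\log\|\xx\|)_+$ on $\Lambda'$, whence $\sum_{\xx\in\Lambda'}(-\log\|\xx\|)_+<2$. Using the standard geometric fact that the primitive vectors of norm $\le\rho$ span a subspace of dimension $\#\{i:\lambda_i\le\rho\}$ — so there are at least $2\#\{i:\lambda_i\le\rho\}$ of them in $\mathcal B_\rho$ — together with the elementary identity $\log t=\int_0^\infty(\mathbf 1[\rho<t]-\mathbf 1[\rho<1])\,\rho^{-1}\,\mathrm d\rho$ summed over $i$, I would obtain
$$\sum_{i=1}^m\log\lambda_i(\Lambda)\;\ge\;-\int_0^1\frac{\#\{i:\lambda_i\le\rho\}}{\rho}\,\mathrm d\rho\;\ge\;-\frac12\int_0^1\frac{\#(\Lambda'\cap\mathcal B_\rho)}{\rho}\,\mathrm d\rho\;=\;-\frac12\sum_{\xx\in\Lambda'}(-\log\|\xx\|)_+\;>\;-1,$$
i.e. $\prod_{i=1}^m\lambda_i(\Lambda)>e^{-1}$. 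Substituting the chosen $V=(1+o(1))\frac{1-e^{-m}}{2m\zeta(m)}\mathrm{vol}(\mathcal B_1)$ into $\prod_i\Delta_i(\Lambda)^{1/m}=\mathrm{vol}(\mathcal B_1)\prod_i\lambda_i/(2^mV)$ and letting $\delta\to0$ then gives
$$\prod_{i=1}^m\Delta_i(\Lambda)^{1/m}\;>\;\frac{e^{-1}\,\mathrm{vol}(\mathcal B_1)}{2^mV}\;=\;\frac{2m\,\zeta(m)\,(1-\varepsilon)}{e\,(1-e^{-m})\,2^m}$$
for every prescribed $\varepsilon>0$, which is the assertion.

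The one genuine choice in the argument is the test function: capping $(-\log\|\xx\|)_+$ at height $1$, equivalently at radius $\rho_0=1/e$, is exactly what makes the ``even integer'' step force $\lambda_1>1/e$, and this in turn both deletes the divergent small-radius part of the layer-cake integral and supplies the factor $(1-e^{-m})^{-1}$ (via the value $\int_W f=\frac{1-e^{-m}}{m}\mathrm{vol}\,\mathcal B_1$). I expect the point needing the most care to be the verification that a single semi-admissible $f$ (rather than a whole family of ball indicators, for which a union bound would be hopelessly lossy over the relevant shrinking range of radii) genuinely controls all scales at once — but that is precisely why $f$ is built as a radial function whose layer-cake decomposition bundles all of the counts $\#(\Lambda'\cap\mathcal B_\rho)$ into the one quantity $\sum_{\xx\in\Lambda'}f(\xx)$. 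The remaining steps — the value of $\int_W f$, the exchange of sum and integral, and the limits $\delta\to0$, $p_j\to\infty$ — are routine, justified by dominated convergence exactly as in the proof of Theorem~\ref{thm:AverageEnsembleGeneral}.
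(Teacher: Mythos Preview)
Your proof is correct and follows essentially the same route as the paper: this is precisely Rogers' method, applied via Theorem~\ref{thm:AverageEnsembleGeneral}(ii) to a capped logarithmic test function. The paper's proof (given in full for Corollary~\ref{cor:betterCyclotomic}, to which Corollary~2 explicitly defers) uses the same shape of $f$ --- constant on an inner ball, then an affine function of $\log\|\yy\|$ on an annulus, then zero --- and the same two-step logic: first force all successive-minima vectors outside the inner ball, then use $f(v_i)\ge c-\log(\lambda_i/r)$ to bound $\sum_i\log\lambda_i$ from below. Your layer-cake formulation of the second step (bounding $\#(\Lambda'\cap\mathcal B_\rho)\ge 2\#\{i:\lambda_i\le\rho\}$ and integrating against $\rho^{-1}\,d\rho$) is just an equivalent repackaging of the paper's direct evaluation $\sum_{\xx\in\Lambda'}f(\xx)\ge 2\sum_i f(v_i)$; after unwinding, both yield $\prod_i\lambda_i>r^m$ with $r=1/e$ in your normalization.
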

\begin{proof}
	The proof follows from a method of Rogers \cite{Rogers}, choosing $f(x)$ appropriately in the Minkowski-Hlawka theorem. We shall give a complete proof in the next section, in the context of $\mathcal{O}_K$-lattices.
\end{proof}

\section{Constructions From Number Fields}
\label{sec:numbFields}
From now on we consider constructions of random ensembles based on algebraic number theory. We refer the reader to \cite{Fermat} and \cite{Notes} for an introduction to the theory, as well as undefined notation.

Let $K/\mathbb{Q}$ be a number field with degree $n$ and signature $(r_1,r_2)$. Denote its real embeddings by $\sigma_1,\ldots,\sigma_{r_1}$ and their pairs of complex embeddings by $$\sigma_{r_1+1}, \overline{\sigma_{r_1+1}},\ldots,\sigma_{r_1+r_2+1},\overline{\sigma_{r_1+r_2+1}}.$$ Let $\mathcal{O}_K$ be the ring of integers of $K$ and $\mathcal{I}\subset \mathcal{O}_K$ be an ideal. An ideal can be identified with a real lattice of dimension $(r_1+2r_2)$ via the canonical embedding 
$$\sigma : \mathcal{O}_K \to \mathbb{R}^{r_1+2r_2}$$
\begin{equation*}
\begin{split}\sigma(x) = (\sigma_1(x),\ldots,\sigma_{r_1}(x),&\Re \sigma_{r_1+1}(x),\ldots \Re \sigma_{r_1+r_2+1}(x),\\ &\Im \sigma_{r_1+1}(x),\ldots,\Im \sigma_{r_1+r_2+1}(x)).
\end{split}
\end{equation*}
Lattices constructed from the embedding of ideals $\mathcal{I} \subset \mathcal{O}_K$ are called \textit{ideal lattices}, and appear in the study of modular forms, coding theory, and cryptography. In this section we study the Minkowski Hlawka theorem for $\mathcal{O}_K$-lattices and related structures.

Let $E=K\otimes_{\mathbb{Q}} \mathbb{R} $ be the Euclidean space generated by $K$. An $\mathcal{O}_K$-lattice is a free $\mathcal{O}_K$ sub-module of $E^t$, for some $t>0$. In particular, an $\mathcal{O}_K$ lattice is closed under multiplication by elements of $\mathcal{O}_K$. The Euclidean norm in $E$ is induced by the trace form. Notice that $K$ is naturally embedded in $E$. In the cases when $K$ is either totally real or a totally imaginary extension of a real number field (CM-field) some notational simplifications can be made. For instance, we can write the trace form as
$$\text{tr}(x\overline{y}) = \sigma_1(x)\overline{\sigma_1(x)} + \ldots + \sigma_n(x)\overline{\sigma_n(x)}.$$

We discuss the average behavior of a general reduction from algebraic number theory \cite{KosiOngOggier,Our,Adaptative}, defined in the sequel. A prime $p$ is said to split completely if $p\mathcal{O}_K$ can be factored into the product of prime ideals $\mathfrak{p}_1 \mathfrak{p}_2 \cdots \mathfrak{p}_n$.


\begin{defi} Let $p$ be a prime that splits completely, and $\mathfrak{p}$ an ideal above $p$. Consider $\pi: \mathcal{O}_K \to \mathcal{O}_K/\mathfrak{p} \simeq \mathbb{F}_p$ a projection onto $\mathfrak{p}$ and $\sigma$ the canonical embedding. Let $\Lambda=\sigma(\mathcal{O}_K)^t$ (the canonical embedding is applied componentwise). Take
	$$\phi_p:\Lambda\to\mathbb{F}_p^t$$
	$$\phi_p(\sigma(x_1,\ldots,x_t))=(\pi(x_1),\ldots,\pi(x_t))$$
	and define $\Lambda_p(\mathcal{C})=\phi_p^{-1}(C) \subset \mathbb{R}^{nt}$.
	\label{def:con2}
\end{defi}

%

\begin{lemma} The ensemble induced by Definition \ref{def:con2} is non-degenerate.
\end{lemma}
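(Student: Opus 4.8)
The plan is to verify the non-degeneracy condition $\lambda_1(\Lambda_p) \geq c\, p^{n/m}$ directly, where here $m = nt$ is the rank and $\Lambda_p = \ker \phi_p = \sigma(\mathfrak{p} \times \mathcal{O}_K^{t-1}) + \ldots$ — more precisely, $\Lambda_p$ consists of all $\sigma(x_1,\ldots,x_t)$ with every $x_i \in \mathfrak{p}$, i.e. $\Lambda_p = \sigma(\mathfrak{p})^t$ (applying $\sigma$ componentwise). The key observation is that $\mathfrak{p}$ is an ideal of $\mathcal{O}_K$ of norm $p$ (since $p$ splits completely), so $\sigma(\mathfrak{p})$ is a sublattice of $\sigma(\mathcal{O}_K)$ of index $N(\mathfrak{p}) = p$, hence $V(\sigma(\mathfrak{p})) = p \cdot V(\sigma(\mathcal{O}_K))$ and $V(\Lambda_p) = p^t V(\Lambda)$. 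This matches the general normalization since $n_{\text{ambient}} = 1$ in the reduction $\phi_p : \Lambda \to \mathbb{F}_p^t$, so we must show $\lambda_1(\sigma(\mathfrak{p})^t) = \lambda_1(\sigma(\mathfrak{p})) \geq c\, p^{1/n}$ for some constant $c$ independent of $p$ (noting $n/m = n/(nt) = 1/t$, wait — I need to recheck: actually with $m = nt$ and the target exponent from the general theorem being $n_{\text{ambient}}/m = t/(nt) = 1/n$; so the goal is $\lambda_1(\Lambda_p) \geq c\, p^{1/n}$).

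First I would reduce to the single-block case: since the trace form on $E^t$ is the orthogonal sum of $t$ copies of the trace form on $E$, a nonzero vector of $\sigma(\mathfrak{p})^t$ has norm at least $\lambda_1(\sigma(\mathfrak{p}))$, so it suffices to bound $\lambda_1(\sigma(\mathfrak{p}))$ from below. Next, for any nonzero $x \in \mathfrak{p}$, I would use the arithmetic-geometric mean inequality on the embeddings: $\|\sigma(x)\|^2 = \sum_{i} |\sigma_i(x)|^2 \geq n \left(\prod_i |\sigma_i(x)|^2\right)^{1/n} = n\, |N_{K/\mathbb{Q}}(x)|^{2/n}$. Since $x \in \mathfrak{p}$ and $x \neq 0$, the principal ideal $(x)$ is divisible by $\mathfrak{p}$, so $|N_{K/\mathbb{Q}}(x)| = N((x)) \geq N(\mathfrak{p}) = p$. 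Therefore $\|\sigma(x)\|^2 \geq n\, p^{2/n}$, giving $\lambda_1(\sigma(\mathfrak{p})) \geq \sqrt{n}\, p^{1/n}$, and hence $\lambda_1(\Lambda_p) \geq \sqrt{n}\, p^{1/n}$. Taking $c = \sqrt{n}$ (a constant, since $K$ and $t$ are fixed) shows the sequence of reductions is non-degenerate, and the claim follows from the remark immediately after Example \ref{ex:naturalReduction}.

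The main obstacle — really the only subtlety — is bookkeeping the exponent: one must carefully track that in this construction the "$n$" of the abstract Theorem \ref{thm:AverageEnsembleGeneral} is the ambient dimension $t$ of $\mathbb{F}_p^t$ (not the field degree), that the rank $m$ equals $nt$, and that consequently the required bound $\lambda_1(\Lambda_p) \geq c\, p^{n_{\text{amb}}/m} = c\,p^{1/n}$ is exactly what the norm inequality delivers. A secondary point worth stating cleanly is that $\ker\phi_p = \sigma(\mathfrak{p})^t$ because $\ker \pi = \mathfrak{p}$ by construction, and that $\mathfrak{p}$ being a prime above a completely split $p$ has residue degree $1$, so $N(\mathfrak{p}) = p$; both are standard but should be mentioned for completeness. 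Everything else is the AM–GM estimate and the multiplicativity of the ideal norm, which require no real work.
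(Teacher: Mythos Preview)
Your argument is correct and is essentially the same as the paper's: identify $\Lambda_p=\sigma(\mathfrak{p})^t$, use that any nonzero $x\in\mathfrak{p}$ has $|N_{K/\mathbb{Q}}(x)|\geq N(\mathfrak{p})=p$, and apply AM--GM on the embeddings to obtain $\lambda_1(\Lambda_p)\geq \sqrt{n}\,p^{1/n}$. The paper compresses this into two sentences, while you have spelled out the AM--GM step, the reduction to a single block, and the exponent bookkeeping, but the route is identical.
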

\begin{proof} 
	The minimum algebraic norm of an element of $\mathfrak{p}$ is greater or equal than $p$. Hence $\Lambda_p=\sigma(\mathfrak{p})^t$ has minimum norm at least $\sqrt{n} p^{1/n}$, finishing the proof.
\end{proof}

A very important caveat to the previous lemma is the fact that there must exist an infinite number of primes $p$ such that the construction above is possible. This follows from Chebotarev's density theorem (e.g. \cite{Algebraic1}{ Cor. 13.6, p. 547}), which implies that the natural density $\delta$ of primes that split completely in $K$ is positive (indeed, one has $0 < \delta \leq 1/n!$).

\begin{rmk}
	Very similarly, it is possible to prove that the constructions in \cite{ConsAOggier} are non-degenerate.
\end{rmk}

Suppose that $K$ contains $r(K)$ roots of unity. Let $\mu$ be a root of unity contained in $\mathcal{O}_K$. It follows that $\left\| \sigma(\mu \xx )\right\| = \left\| \sigma(\xx )\right\|$ (e.g. \cite[Lem. 3.1]{Autissier}). Therefore, each $\Lambda$ constructed as in Definition \ref{def:con2} contains at least $r(K)$ minimal vectors and we automatically obtain the density $r(K)(1-\varepsilon)/2^{nt}$ (this argument was used by Venkatesh \cite{Venkatesh} to prove that cyclotomic lattices ($\mathbb{Z}[\mu]$-lattices where $\mu$ is a root of unity) achieve density $m(1-\varepsilon)/2^{2\phi(m)}$). For $t$-dimensional $\mathcal{O}_K$ lattices, however, there is a loss of a linear factor of $t$ in the enumerator. Nevertheless, we can improve the following density up to that of Corollary \ref{cor:betterCyclotomic}, using a method by Rogers \cite{Rogers}, recently employed in \cite{Vance} to quaternionic lattices. The basic idea is to apply to Theorem \ref{thm:AverageEnsembleGeneral}, rather than indicator function of a ball, to a bounded-support function that allows us to analyze the generalized densities of the ensemble. After ensuring the existence of a lattice with good generalized densities, it is possible to apply standard linear transformations to such a lattice (see e.g. \cite[Thm. 2.2]{Vance}) in order in order to transform it into a lattice with good \textit{packing density}.

\mbox{} \\
\textit{Proof of Corollary \ref{cor:betterCyclotomic}:} . For $\Lambda_0 \subset \mathcal{O}_K^t$ let the $i$-th successive minima of $\Lambda_0$  (over $K$) be the smallest $i$-such that the ball $\mathcal{B}_r$ contains the canonical embedding of $i$ linearly independent vectors (over $K$). More formally
\begin{equation}
\begin{split}
\lambda_i^K(\Lambda_0) = \min \left\{r > 0: \dim \mbox{span}_{K} \left(\sigma^{-1} \left(\sigma(\Lambda_0) \cap \mathcal{B}_r \right)\right)=i \right\}.
\end{split}
\end{equation}
Notice that $\lambda_1^K(\Lambda_0) = \lambda_1(\sigma(\Lambda_0))$ and, in general $\lambda_i^K(\Lambda_0) \geq \lambda_i(\sigma(\Lambda_0))$. Also, if $\xx_1,\ldots,\xx_t$ are linearly independent over $K$ and achieve the sucessive minima of $\Lambda_0$, then the embeddings $\sigma(\xx_1),\ldots, \sigma(\xx_t)$ are linearly independent and primitive in $\sigma(\Lambda_0) \subset \mathbb{R}^{nt}$. Now let $f:\mathbb{R}^{nt} \to \mathbb{R}$ be the following function with limited support:
\begin{equation}
f(\yy)=\left\{ \begin{array}{cc} 
1/n & \mbox{if }\left\|\yy \right\| \leq re^{(1-t)/tn} \\
\frac{1}{nt} - \log\left(\frac{\left\|\yy\right\|}{r}\right) & \mbox{if } re^{(1-t)/tn}\leq \left\|\yy \right\| \leq re^{1/tn}\\
0 & \mbox{otherwise.}\end{array}\right. 
\end{equation} 
We have 
\begin{equation*}
\int_{\mathbb{R}^{nt}} f(\yy) \mbox{d}\yy = \frac{e(1-e^{-t}) r^{nt} \mbox{vol }\mathcal{B}_1}{nt}.
\end{equation*}
Choose $r$ such that the right-hand side of this last equation is equal to $r(K)V\zeta(nt)(1-\varepsilon)/n$ for a small $\varepsilon < 1$.
Let $\phi_p$ be as in Definition \ref{def:ConsGen} and $\mathbb{L}_p$ its induced ensemble
\begin{equation}
\mathbb{L}_p=\left\{ \beta \Lambda_p(C) : C \mbox{ is an } (n,k,p)-\mbox{code}\right\}
\end{equation}
as in Equation \eqref{eq:ensemble}. According to Theorem \ref{thm:AverageEnsembleGeneral}, it is possible to find $\Lambda_1 = \beta \sigma(\Lambda_0) \in \mathbb{L}_{p}$ of volume $V$ such that, for $p$ sufficiently large
\begin{equation}
\sum_{\yy\in\Lambda_1^{\prime}} f(\yy) \leq (1-\varepsilon) \frac{r(K)}{n} < \frac{r(K)}{n},
\end{equation}
Let $v_1, \ldots, v_t$ be linearly independent vectors in $\Lambda_0$ achieving the successive minima, $\left\|\beta \sigma(v_i)\right\|=\lambda_i^K(\Lambda_0)$. We have 
$$\sum_{\yy\in\Lambda_1^{\prime}} f(\yy) \geq \sum_{i=1}^t \sum_{\mu} f(\beta \sigma(\mu v_i))  = r(K) \sum_{i=1}^t f(\beta \sigma(v_i)),$$
where the sum with subscript $\mu$ is over all roots of unity in $K$.
From this we conclude that, for all $i$, $\beta \lambda_i^K(\Lambda_0) \geq r e^{1/n-1}$ and
$$\frac{1}{n} - \log\left(\frac{\beta^t 
\prod \lambda_i^K(\Lambda_0)}{r^t}\right) < \frac{1}{n}.$$
Therefore, for the $t$ successive densities (Eq. \eqref{eq:successiveDensities}):

\begin{equation}\begin{split}
\left(\prod_{i=1}^{t} \Delta_i^K\right)^{1/t} &= \prod_{i=1}^t \left( \frac{\mbox{vol}(\mathcal{B}_{\lambda_i^K/2})}{V} \right)^{1/t} \\ &\geq \frac{ r(K) t \zeta(nt) (1-\varepsilon)}{e(1-e^{-t})2^{nt}}.
\end{split}
\end{equation}
But in this case, we can find $\tilde{\Lambda}$ whose packing density (or $\Delta_1^K$) is greater or equal than $\frac{ r(K) t \zeta(nt)}{e(1-e^{-t})2^{nt}(1-\varepsilon)}$ (e.g. \cite[Thm. 2.2]{Vance}). 
\qed
%
%
%
\section{Balanced Sets of Codes over Matrix Rings}
\label{sec:redRings}
In some contexts, the ``natural'' underlying alphabet in the reduction $\phi_p$ is, rather than the field $\mathbb{F}_p$, the ring  $\mathcal{M}_n(\mathbb{F}_p)$ of $n \times n$ matrices with entries in $\mathbb{F}_p$. Although we can identify $\mathcal{M}_n(\mathbb{F}_p)$ with $\mathbb{F}_p^{n^2}$, the identification does not carry enough algebraic structure for our purposes. For instance, we cannot guarantee that the constructed lattices are closed under multiplication by units, which is crucial in order to obtain the full density improvements of these lattices, as in \cite{Vance}.  For this reason, we study in this section a version of Theorem \ref{thm:AverageEnsembleGeneral} for codes over matrix rings.
\subsection{An Averaging Bound for Codes over Rings}
Let $\mathcal{R}$ be a finite ring and $\mathcal{R}^{*}$ its units. Denote by $(\mathcal{R}^n)^{*}$ the set of vectors in $\mathcal{R}^n$ such that at least one coordinate is a unit. A linear code in $\mathcal{C} \subset \mathbb{R}^n$ is a \textit{free}\footnote{This may differ from the literature, where a linear code over a ring is simply an additive subgroup of $\mathcal{R}^n$. The requirement that a linear code is a \textit{free} module is necessary for Lemma \ref{lem:balancedRings} to hold.} $\mathcal{R}$-submodule of $\mathcal{R}^n$ (with the natural scalar multiplication). Following \cite{Loeliger}, we define balanced sets of codes as follows.

\begin{defi} Consider a non-empty set of codes $\mathcal{C}_b$ of same cardinality. We say that $\mathcal{C}_b$ is \textit{balanced} if any $\xx \in (\mathcal{R}^n)^{*}$ is contained in the same number of codes (say, $L$) of $\mathcal{C}_b$. 
\end{defi}
%
Let $M$ be the cardinality of a code in $\mathcal{C}_b$. From a counting argument, one can see that $M |\mathcal{C}_b| \geq L |(\mathcal{R}^n)^{*}|$. The following lemma shows how to bound averages of functions in $(\mathcal{R}^n)^*$.

\begin{lemma} Let $g : \mathcal{R}^n \to \mathbb{R}^+$ be a function. For a code ${C}$, we define $g^*(\mathcal{C}) = \sum_{\mathbf{c} \in C \cap (\mathcal{R}^n)^*} g(c)$. If $\mathcal{C}_b$ is the set of all codes of rank $k$ then
	$$E\left[g^*(C)\right] \leq \frac{\left| \mathcal{R} \right|^k}{\left|(\mathcal{R}^n)^*\right|} g^*(\mathcal{R}^n),$$
	where the expectation is with respect to the uniform distribution on $\mathcal{C}_b$.
	\label{lem:balancedRings}
\end{lemma}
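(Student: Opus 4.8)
The plan is to show that the set $\mathcal{C}_b$ of all free rank-$k$ submodules of $\mathcal{R}^n$ is balanced, and then to derive the inequality from a straightforward double-counting argument. First I would establish balancedness: I claim that for any two vectors $\xx, \yy \in (\mathcal{R}^n)^*$ there is a module automorphism of $\mathcal{R}^n$ (an element of $\mathrm{GL}_n(\mathcal{R})$) carrying $\xx$ to $\yy$. The key observation is that a vector with at least one unit coordinate is \emph{unimodular}, i.e. it can be completed to a basis of the free module $\mathcal{R}^n$: after permuting coordinates and multiplying by the inverse of the unit coordinate we may assume $x_1 = 1$, and then the elementary column operations that clear the remaining coordinates are invertible over $\mathcal{R}$, so $\xx$ is in the $\mathrm{GL}_n(\mathcal{R})$-orbit of $\bm{e}_1$; the same holds for $\yy$. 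Since $\mathrm{GL}_n(\mathcal{R})$ acts on $\mathcal{C}_b$ (automorphisms send free rank-$k$ submodules to free rank-$k$ submodules) and acts transitively on $(\mathcal{R}^n)^*$, the number of codes in $\mathcal{C}_b$ containing a given $\xx \in (\mathcal{R}^n)^*$ is independent of $\xx$; call this common number $L$, and let $M = |\mathcal{R}|^k$ be the common cardinality of a rank-$k$ free code.

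Next I would count incidences. Summing $g$ over all pairs $(\mathcal{C}, \mathbf{c})$ with $\mathcal{C} \in \mathcal{C}_b$ and $\mathbf{c} \in \mathcal{C} \cap (\mathcal{R}^n)^*$ in two ways gives
\begin{equation*}
\sum_{\mathcal{C} \in \mathcal{C}_b} g^*(\mathcal{C}) \;=\; \sum_{\mathbf{c} \in (\mathcal{R}^n)^*} g(\mathbf{c}) \cdot \#\{\mathcal{C} \in \mathcal{C}_b : \mathbf{c} \in \mathcal{C}\} \;=\; L \sum_{\mathbf{c} \in (\mathcal{R}^n)^*} g(\mathbf{c}) \;=\; L\, g^*(\mathcal{R}^n).
\end{equation*}
Dividing by $|\mathcal{C}_b|$ yields $E[g^*(\mathcal{C})] = (L/|\mathcal{C}_b|)\, g^*(\mathcal{R}^n)$. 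Finally, to turn the equality into the claimed bound, I invoke the counting inequality $M|\mathcal{C}_b| \geq L |(\mathcal{R}^n)^*|$ already noted in the text (each code contributes $M$ unimodular-admissible slots — more precisely, count pairs $(\mathcal{C}, \mathbf{c})$ with $\mathbf{c}\in\mathcal{C}\cap(\mathcal{R}^n)^*$: this is at most $M|\mathcal{C}_b|$ and exactly $L|(\mathcal{R}^n)^*|$), which gives $L/|\mathcal{C}_b| \leq M/|(\mathcal{R}^n)^*| = |\mathcal{R}|^k/|(\mathcal{R}^n)^*|$. Since $g \geq 0$, multiplying through preserves the inequality and the lemma follows.

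The main obstacle is the balancedness step — specifically verifying that a vector with a unit coordinate is unimodular over a possibly noncommutative finite ring $\mathcal{R}$ such as $\mathcal{M}_n(\mathbb{F}_p)$, and that the relevant ``row/column reduction'' manipulations really are realized by elements of $\mathrm{GL}_n(\mathcal{R})$ acting on the correct side. One must be careful about left versus right modules and about the fact that $\mathcal{R}$ is not a PID, so the classical Smith-normal-form intuition has to be replaced by the explicit elementary-transvection argument sketched above; transvections $I + r E_{ij}$ ($i\neq j$) are units in $\mathrm{GL}_n(\mathcal{R})$ for any $r\in\mathcal{R}$, which is exactly what makes the clearing step work without any division. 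Everything else is bookkeeping.
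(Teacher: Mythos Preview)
Your proof is correct and follows essentially the same route as the paper: establish balancedness by showing that any $\xx\in(\mathcal{R}^n)^*$ is carried to $\bm{e}_1$ by an element of $\mathrm{GL}_n(\mathcal{R})$, then combine the double-counting identity with the inequality $M|\mathcal{C}_b|\geq L|(\mathcal{R}^n)^*|$. The paper simply asserts the existence of the invertible map $T$ without justification, so your explicit transvection argument and attention to the noncommutative left/right issues actually fill in detail that the paper omits.
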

\begin{proof}
	For any balanced set of codes with cardinality $M$, we have
	\begin{equation*}
	\begin{split}
	E[g^*({C})]  &= E\left[\sum_{\mathbf{c} \in {C} \cap (\mathcal{R}^n)^*} g(c)\right] = E\left[\sum_{\mathbf{x} \in (\mathcal{R}^n)^*} g(\xx) \mathbbm{1}_{{C}}(\xx)\right]
	\\ & = \sum_{\mathbf{x} \in (\mathcal{R}^n)^*} E\left[ g(x) \mathbbm{1}_{{C}}(x)\right] \\ &= \sum_{\mathbf{x} \in (\mathcal{R}^n)^*} g(x) \frac{L}{|\mathcal{C}_b|} \leq \frac{M}{|(\mathcal{R}^n)^*|} g^*(\mathcal{R}^n).
	\end{split}
	\end{equation*}
	We now need to prove that the set of all codes of rank $k$ is balanced. Let  $\yy$ be any element in ${{\mathcal{R}^n}^*}$. There exists an invertible linear map $T(\yy) = (1,0,\ldots,0) = \mathbf{e}_1$. Since $T$ is rank-preserving, $\yy \in {C}$ if and only if $\mathbf{e}_1 \in T{({C})}$, where ${C}$ and $T({C})$ have same rank. This induces a bijection between the codes that contain $\yy$ and the codes that contain $e_1$, proving the statement.
\end{proof}

\subsection{Lipschitz and Hurwitz Lattices}
The quaternion skew-field $\mathbb{H}$ is given by $\mathbb{H} = \left\{ a+bi + (c+di)j: a, b, c, d \in \mathbb{R}\right\}$, with the usual relations $i^2=j^2=-1$ and $ij = -ji$. Vance recently \cite{Vance} proved a Minkowski-Hlawka theorem for lattices in $\mathbb{H}$ over the Hurwitz order. Here we show how to recover a ``coding-theoretic'' version of this result from generalized reductions.

We first explain how to deduce a slightly simpler case, for the Lipschitz order. The \textit{Lipschitz integers} $\mathcal{L} \subset \mathbb{H}$ is the (non-maximal) order $\mathcal{L}=\left\{x+yj : x, y \in \mathbb{Z}[i]\right\}.$  Recall that a quaternion has matrix representation
$$\left(\begin{array}{cc} x & -\overline{y} \\ y & \overline{x} \end{array}\right).$$
Let $\mathfrak{p}$ be an ideal in $\mathbb{Z}[i]$ above $p$ that splits. Let $\pi:\mathbb{Z}[i]\to\mathbb{Z}[i]/\mathfrak{p}$ be a projection. We consider the following ``single-letter'' reduction:
\begin{equation*}\phi_p^{\mathbb{H}}:\mathcal{L} \to \mathcal{M}_2(\mathbb{F}_p)
\end{equation*}
\begin{equation*}
\phi_p(x+yj)^{\mathbb{H}}=\left(\begin{array}{cc} \pi(x) & -\pi(\overline{y}) \\ \pi(y) & \pi(\overline{x}) \end{array}\right).
\end{equation*}
We have $\ker \phi_p^{\mathbb{H}} = (p\mathbb{Z}[i])+(p\mathbb{Z}[i])j$. Identifying $\mathbb{H}$ with $\mathbb{R}^4$ in the natural way 
$$\psi(a,b,c,d) \to a+bi + (c+di)j$$
we obtain a reduction $\phi_p:\mathbb{Z}^4 \to \mathcal{M}_2(\mathbb{F}_p)$, $\phi_p(x) = \phi_p^{\mathbb{H}}(\psi(x)).$ By abuse of notation, we will also denote by $\phi_p^{\mathbb{H}}$ the reduction applied componentwise in vector of $\mathcal{L}^m$, i.e., 
\begin{equation}\begin{split} &\phi_p^{\mathbb{H}}(x_1+y_1j,\ldots,x_m+y_mj) \\ &= (\phi_p^{\mathbb{H}}(x_1+y_1j),\ldots,\phi_p^{\mathbb{H}}(x_m+y_mj))\in\mathcal{M}_2(\mathbb{F}_p)^m.
\end{split}
\end{equation}
If ${C} \subset \mathcal{M}_2(\mathbb{F}_p)^m$ is a linear code, then $\Lambda_p^{\mathbb{H}}({C})=(\phi_p^{\mathbb{H}})^{-1}({C})$ is a quaternionic lattice with volume $|C| p^{-4m}$. Let $\mathcal{C}_b$ be a balanced set and $\mathbb{L}_p$ the associated lattice ensembles
$$\mathbb{L}_p = \left\{ \beta \Lambda_p^{\mathbb{H}}({C}) :  {C} \in \mathcal{C}_b \right\},$$
where $\beta = (V/(|C|^{-1} p^{4m}))^{1/(4m)}$. The following Theorem \ref{thm:matrices} is the analogous of Theorem \ref{thm:2} for Lipschitz lattices. We need the following lemma
\begin{lemma}
	If $\phi_p(x+yj)$ is non-invertible for $x+yi\in\mathcal{L}$, then the squared norm of $x+yj$ is a multiple of $p$.
	\label{lem:1}
\end{lemma}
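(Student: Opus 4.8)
The plan is to reduce the statement to a computation with norms modulo $p$. Recall that for a quaternion $q = x + yj$ with $x,y \in \mathbb{Z}[i]$, the reduced norm is $\mathrm{Nrd}(q) = q\bar q = x\bar x + y\bar y = |x|^2 + |y|^2$, which equals the squared Euclidean norm $\|\psi^{-1}(q)\|^2$ under the identification $\psi$. Also $\mathrm{Nrd}(q) = \det\!\left(\begin{smallmatrix} x & -\bar y \\ y & \bar x\end{smallmatrix}\right)$. So the claim ``squared norm of $x+yj$ is a multiple of $p$'' is exactly the claim that $p \mid \det\!\left(\begin{smallmatrix} x & -\bar y \\ y & \bar x\end{smallmatrix}\right)$ in $\mathbb{Z}$.

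First I would observe that $\phi_p^{\mathbb{H}}(x+yj)$ is, by construction, the entrywise reduction of the matrix $\left(\begin{smallmatrix} x & -\bar y \\ y & \bar x\end{smallmatrix}\right)$ via $\pi : \mathbb{Z}[i] \to \mathbb{Z}[i]/\mathfrak{p} \simeq \mathbb{F}_p$. Since $\pi$ is a ring homomorphism, it commutes with the determinant: $\det\bigl(\phi_p^{\mathbb{H}}(x+yj)\bigr) = \pi\bigl(\det(\ldots)\bigr) = \pi(x\bar x + y\bar y) = \pi(|x|^2+|y|^2)$ in $\mathbb{F}_p$. Now $\phi_p^{\mathbb{H}}(x+yj)$ being non-invertible as a matrix over $\mathbb{F}_p$ means precisely that this determinant vanishes, i.e. $\pi(|x|^2 + |y|^2) = 0$ in $\mathbb{Z}[i]/\mathfrak{p}$, i.e. $|x|^2 + |y|^2 \in \mathfrak{p}$. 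But $|x|^2 + |y|^2$ is a rational integer, and $\mathfrak{p}$ lies above $p$, so $\mathfrak{p} \cap \mathbb{Z} = p\mathbb{Z}$; hence $p \mid |x|^2 + |y|^2$, which is the assertion.

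The only genuinely substantive point — and the one I would be most careful about — is the last step: that a rational integer lying in $\mathfrak{p}$ is divisible by $p$. This is immediate from $\mathfrak{p} \cap \mathbb{Z} = p\mathbb{Z}$, which holds because $\mathfrak{p}$ is a prime ideal of $\mathbb{Z}[i]$ containing $p$ (so $p \in \mathfrak{p}\cap\mathbb{Z}$, giving $p\mathbb{Z} \subseteq \mathfrak{p}\cap\mathbb{Z}$), and $\mathfrak{p}\cap\mathbb{Z}$ is a prime ideal of $\mathbb{Z}$ that is proper (since $1\notin\mathfrak{p}$) and contains the maximal ideal $p\mathbb{Z}$, forcing equality. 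Everything else is the formal fact that $\pi$ respects the determinant and that the matrix representation sends the reduced norm to the determinant; these are routine once spelled out. I expect no real obstacle here — the lemma is essentially a bookkeeping statement linking non-invertibility of the reduced matrix, vanishing of a determinant, and divisibility of a norm.
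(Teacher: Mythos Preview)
Your proof is correct and follows essentially the same route as the paper: compute $\det\phi_p^{\mathbb{H}}(x+yj)=\pi(x\bar x+y\bar y)$, conclude that non-invertibility forces $|x|^2+|y|^2\in\mathfrak{p}$, and then use that this quantity is a rational integer together with $\mathfrak{p}\cap\mathbb{Z}=p\mathbb{Z}$. The paper's proof is terser (it simply says ``the norm is an integer, and $\mathfrak{p}$ is above $p$''), but your explicit justification of $\mathfrak{p}\cap\mathbb{Z}=p\mathbb{Z}$ is a welcome clarification of exactly that step.
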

\begin{proof}
	If $\det \phi_p(x+yj) = 0$, then $\pi(x \overline{x} + y \overline{y}) = 0$, i.e., $\left\| (x,y)\right\|^2 \in \mathfrak{p}$. Since the norm of a Lipschitz quaternion is an integer, and $\mathfrak{p}$ is above $p$, the result follows.
\end{proof}
\begin{thm} Let $\mathcal{C}_b$ be a balanced set of codes with rank $k > m/2$. If $f$ is a semi-admissible function then
	\begin{equation}\lim_{p\to\infty} E_{\mathbb{L}_p}\left[\sum_{\xx \in \beta \Lambda_p^{\mathbb{H}}({C})}f(\psi(\xx))\right] \leq (\zeta(4m)V^{-1}) \int_{\mathbb{R}^{4m}} f(\xx) d\xx.
	\label{eq:LipschitzIntegers}
	\end{equation}
	\label{thm:matrices}
\end{thm}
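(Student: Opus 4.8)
The plan is to mimic the proof of Theorem~\ref{thm:AverageEnsembleGeneral}, replacing Loeliger's balancedness equation \eqref{eq:LoeligerLemma} with the matrix-ring version, Lemma~\ref{lem:balancedRings}, and using Lemma~\ref{lem:1} to control the ``bad'' (non-invertible) terms. First I would identify $\mathcal R = \mathcal M_2(\mathbb F_p)$, so that $\mathcal R^n$ becomes $\mathcal M_2(\mathbb F_p)^m$ and $(\mathcal R^m)^{*}$ is the set of codewords with at least one invertible-matrix coordinate. The preimage $(\phi_p^{\mathbb H})^{-1}\big((\mathcal R^m)^{*}\big)$ is precisely the set of $\xx\in\mathcal L^m$ whose image has an invertible coordinate; by Lemma~\ref{lem:1}, every $\xx\in\mathcal L^m$ \emph{not} in this set has all coordinates mapping to singular matrices, hence squared norm divisible by $p$ in each coordinate, so $\|\psi(\xx)\|^2$ is a multiple of $p$ and thus $\|\psi(\xx)\|\geq\sqrt p$ for nonzero such $\xx$ — this is the analogue of the kernel-lattice estimate. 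I would therefore split the sum over $\beta\Lambda_p^{\mathbb H}(\mathcal C)\setminus\{0\}$ into the ``invertible part'' (a coordinate of $\phi_p^{\mathbb H}(\xx)$ is a unit) and the ``singular part''.

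For the singular part, the key point is that this set does not depend on $\mathcal C$ (it is contained in $(\phi_p^{\mathbb H})^{-1}(\{\text{all-singular vectors}\})\subset\mathcal L^m$, a fixed lattice of rank $4m$), and every nonzero vector in it, after scaling by $\beta$, has norm at least $\beta\sqrt p$. Computing $\beta = \big(V\,|C|\,p^{-4m}\big)^{1/(4m)}$ and noting $|C| = |\mathcal R|^k = p^{4k}$, we get $\beta \asymp p^{(k-m)/m}\,V^{1/(4m)}$; since $k > m/2$... actually the relevant scaling is $\beta\sqrt p \to\infty$, which holds because $\beta\sqrt p \asymp p^{(k-m)/m + 1/2}\,V^{1/(4m)}$ and this blows up precisely when $k/m - 1 + 1/2 > 0$, i.e.\ $k > m/2$. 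Hence by the semi-admissibility bound $|f(\yy)|\le b(1+\|\yy\|)^{-4m-\delta}$ and dominated convergence (summing over the fixed rank-$4m$ lattice), the expected singular contribution tends to $0$. This is where the hypothesis $k>m/2$ is used, and where Lemma~\ref{lem:1} is essential.

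For the invertible part, I would apply Lemma~\ref{lem:balancedRings} with $g(c) = f(\psi(\beta\,(\phi_p^{\mathbb H})^{-1}(c)))$ appropriately — more precisely, pushing the expectation through gives
\begin{equation*}
E_{\mathbb L_p}\Big[\sum_{\xx\in\beta\Lambda_p^{\mathbb H}(\mathcal C),\ \phi_p^{\mathbb H}(\xx/\beta)\in(\mathcal R^m)^{*}} f(\psi(\xx))\Big] \le \frac{p^{4k}}{|(\mathcal R^m)^{*}|}\sum_{\substack{\xx\in\mathcal L^m\\ \phi_p^{\mathbb H}(\xx)\in(\mathcal R^m)^{*}}} f(\psi(\beta\xx)),
\end{equation*}
using that the full lattice $\mathcal L^m \simeq \mathbb Z^{4m}$ is $(\phi_p^{\mathbb H})^{-1}(\mathcal R^m)$ and grouping preimages. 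Then I would estimate $|(\mathcal R^m)^{*}| = |\mathcal M_2(\mathbb F_p)^m| - |\{\text{all coordinates singular}\}| = p^{4m} - (p^4 - |\mathrm{GL}_2(\mathbb F_p)|)^m$; since $|\mathrm{GL}_2(\mathbb F_p)| = p^4(1-p^{-1})(1-p^{-2})$, the singular matrices number $p^4\big(1-(1-p^{-1})(1-p^{-2})\big) = O(p^3)$, so $|(\mathcal R^m)^{*}| = p^{4m}(1 + o(1))$. Thus the prefactor $p^{4k}/|(\mathcal R^m)^{*}| = p^{4(k-m)}(1+o(1))$, and $p^{4(k-m)}\sum_{\xx\in\mathcal L^m\setminus\{0\}} f(\psi(\beta\xx))$ is a Riemann sum: with $\beta^{4m} = V p^{4k}p^{-4m}$, i.e.\ $\beta = V^{1/(4m)} p^{k-m}$, we have $p^{4(k-m)} = \beta^{4m}/V = (\text{cell volume of }\beta\mathcal L^m)/V$, so the sum converges to $V^{-1}\int_{\mathbb R^{4m}} f(\xx)\,d\xx$. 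Since $\zeta(4m)^{-1}\le 1$, the bound \eqref{eq:LipschitzIntegers} follows a fortiori; to get the stated $\zeta(4m)^{-1}$ factor one restricts, exactly as in Theorem~\ref{thm:AverageEnsembleGeneral}(ii), to primitive vectors and applies Möbius inversion, the extra terms being again negligible.

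The main obstacle I anticipate is not conceptual but bookkeeping: carefully matching the two notions of ``primitive/good'' vector — the ring-theoretic one ($\phi_p^{\mathbb H}(\xx)$ has a unit coordinate) versus the lattice-theoretic one ($\xx$ primitive in $\Lambda_p^{\mathbb H}(\mathcal C)$) — and making sure the discrepancy between them (vectors that are lattice-primitive but map to all-singular matrices, which is exactly the set controlled by Lemma~\ref{lem:1}) is absorbed into the vanishing error term. One must also verify that $\mathcal L^m \cong \mathbb Z^{4m}$ has rank $4m > 1$ so all the invoked series converge absolutely, and that the Riemann-sum limit is legitimate under the semi-admissible tail bound via dominated convergence, exactly as in the proof of Theorem~\ref{thm:AverageEnsembleGeneral}.
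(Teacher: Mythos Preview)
Your proposal is correct and follows essentially the same route as the paper: split the lattice sum according to whether $\phi_p^{\mathbb H}(\yy)$ lands in $(\mathcal M_2(\mathbb F_p)^m)^{*}$ or not, kill the all-singular part via Lemma~\ref{lem:1} and the scaling $\beta\sqrt{p}=p^{k/m-1/2}\to\infty$ (which is exactly where $k>m/2$ enters), and bound the invertible part by Lemma~\ref{lem:balancedRings} followed by a Riemann-sum argument. Your write-up is in fact more detailed than the paper's own proof, which simply asserts the last step; your computation of $|(\mathcal R^m)^{*}|$ and the M\"obius refinement for the $\zeta(4m)$ factor are the natural way to fill that in (note a harmless slip: from $\beta^{4m}=Vp^{4(k-m)}$ one gets $\beta=V^{1/(4m)}p^{(k-m)/m}$, not $p^{k-m}$, but your subsequent use of $\beta^{4m}/V=p^{4(k-m)}$ is correct).
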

\begin{proof}
	The proof is very similar to that of Theorem \ref{thm:2}. Here we divide the expectation into invertible and non-invertible elements (we make the change of variable $\xx = \beta \yy$, to facilitate), i.e. 
\begin{equation*}\begin{split} \sum_{\yy \in \Lambda_p^{\mathbb{H}}({C})}f(\psi(\beta \yy)) &= \sum_{{\yy \in \Lambda_p^{\mathbb{H}}({C})} \above 0pt {\phi({\yy}) \in (\mathcal{M}_2(\mathbb{F}_p)^m)^{*}}}f(\psi(\beta\yy))\\ &+\sum_{{\xx \in \Lambda_p^{\mathbb{H}}({C})} \above 0pt {\phi({\yy}) \notin (\mathcal{M}_2(\mathbb{F}_p)^m)^{*}}}f(\psi(\beta\yy)).
\end{split}
\end{equation*}
	The first term tends to zero as $p\to\infty$ from Lemma \ref{lem:1}, since $f$ is semi-admissible and
	\begin{equation} \beta\left\| \psi(\yy)\right\| \geq\beta \sqrt{p } = |C|^{1/4m} p^{-1/2} = p^{k/m-1/2} \to \infty,
	\label{eq:boundPHurwitz}
	\end{equation}
	as $p to \infty$.	From Lemma \ref{lem:balancedRings} we conclude that the second term is upper bounded by the right-hand side of \eqref{eq:LipschitzIntegers} as $p \to \infty$.
\end{proof}
For the maximal Hurwitz order 
$$\mathcal{H} = \left\{a+bi+cj+d(-1+i+j+ij)/2 : a, b, c, d \in \mathbb{Z}\right\},$$ the theorem follows by considering reductions from left-prime ideals $\mathcal{P} \triangleleft \mathcal{H}$. For any rational prime $p$, there exist isomorphisms $\mathcal{H}/p\mathcal{H} \sim \mathbb{F}_p(i,j,k) \sim \mathcal{M}_2(\mathbb{F}_p)$ (e.g.  Wedderburn's Theorem \cite[Thm. 6.16 Lem 9.2.1]{Voight}), where non-invertible elements in $\mathcal{H}/p\mathcal{H}$ have reduced norm (determinant) proportional to $p$. Notice that in this case we obtain a reduction
$$\phi_p : D_4 \to \mathcal{M}_2(\mathbb{F}_p),$$
where $D_4$ is the checkerboard lattice in dimension four \cite[Sec. 7.2]{Splag}. An explicit realization of ring isomorphism is obtained by setting
$$\phi(1) = \left(\begin{array}{cc}1 & 0\\0 & 1\end{array}\right), \phi(i) = \left(\begin{array}{cc}0 & -1\\1 & 0\end{array}\right)$$ $$\mbox{ and } \phi(j) = \left(\begin{array}{cc}a & b\\b & -a\end{array}\right),$$
where $a$ and $b$ are two integers such that $a^2+b^2 \equiv -1 \text{ (mod } p)$. Notice that such an isomorphism preserves the residue class of the reduced norm, i.e. $nrd(x) = \det \phi(x) \text{ (mod }p)$, for any $x \in \mathcal{H}$.

\section{Algorithmic Effectiveness}
\label{sec:conc}
Theorem \ref{thm:2} holds in the limit $p_j \to \infty$. However, for each $n$, under some conditions it is possible to find \textit{finite} ensembles that contain dense lattices. In the literature, this is referred to as \textit{effectiveness} (e.g. \cite[p. 18]{Splag} and \cite{Zemor}). We show conditions for a family of reductions to be effective. We need the following lemma, which is a special case of a classical result in the Geometry of Numbers (see \cite[p. 141]{Leker}) and is also valid if $\mathcal{B}_r$ is replaced by more general sets. We include a proof here for the sake of completeness.
\begin{lemma}
	Let $\mathcal{P}$ be a fundamental region for $\Lambda$, an let $l_0 = \sup_{\xx\in\mathcal{P} }\left\|   \xx \right\|.$ For $r > l_0$, we have
	\begin{equation}
	(r-l_0)^n V_n \leq V(\Lambda) {\mathcal{N}_{\Lambda}(r)} \leq (r+l_0)^nV_n.
	\end{equation}
	In particular, we can take $l_0 = \tau(\Lambda)$ to be the covering radius of $\Lambda$.
	\label{lem:pointsEnumerator}
\end{lemma}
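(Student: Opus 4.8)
The plan is to prove the two-sided bound by a volume-packing-and-covering argument: count lattice points inside $\mathcal{B}_r$ by translating a fundamental region $\mathcal{P}$ around each such point and comparing the total volume with the volume of a slightly enlarged (resp. shrunk) ball.

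\textit{Proof.} Let $\mathcal{P}$ be a fundamental region for $\Lambda$, so that the translates $\{\xx + \mathcal{P} : \xx \in \Lambda\}$ tile $\mathbb{R}^n$ (up to measure zero) and each has volume $V(\Lambda)$. Set $l_0 = \sup_{\xx \in \mathcal{P}}\|\xx\|$ and assume $r > l_0$.

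For the upper bound: if $\xx \in \Lambda \cap \mathcal{B}_r$, then for every $\yy \in \xx + \mathcal{P}$ we have $\|\yy\| \leq \|\xx\| + \|\yy - \xx\| \leq r + l_0$, so $\xx + \mathcal{P} \subseteq \mathcal{B}_{r+l_0}$. Since these translates are disjoint up to measure zero, summing volumes gives
$$\mathcal{N}_\Lambda(r)\, V(\Lambda) = \sum_{\xx \in \Lambda \cap \mathcal{B}_r} \vol(\xx + \mathcal{P}) \leq \vol(\mathcal{B}_{r+l_0}) = (r+l_0)^n V_n.$$
For the lower bound: if $\yy \in \mathcal{B}_{r - l_0}$, then $\yy$ lies in exactly one translate $\xx + \mathcal{P}$ with $\xx \in \Lambda$, and then $\|\xx\| \leq \|\yy\| + \|\yy - \xx\| \leq (r - l_0) + l_0 = r$, so $\xx \in \Lambda \cap \mathcal{B}_r$. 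Hence $\mathcal{B}_{r-l_0}$ is covered by $\bigcup_{\xx \in \Lambda \cap \mathcal{B}_r}(\xx + \mathcal{P})$, and taking volumes,
$$(r - l_0)^n V_n = \vol(\mathcal{B}_{r-l_0}) \leq \sum_{\xx \in \Lambda \cap \mathcal{B}_r} \vol(\xx + \mathcal{P}) = \mathcal{N}_\Lambda(r)\, V(\Lambda).$$
Combining the two inequalities yields the claim. Finally, one may always choose $\mathcal{P}$ to be contained in a ball of radius equal to the covering radius $\tau(\Lambda)$ of $\Lambda$ (e.g., the Voronoi cell about the origin), so the statement holds with $l_0 = \tau(\Lambda)$. \qed

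There is essentially no obstacle here; the only point requiring a word of care is that the argument needs a \emph{bounded} fundamental region $\mathcal{P}$ so that $l_0$ is finite, and the identification of the translated copies as a disjoint tiling (up to null sets) on the one hand and a cover on the other — the first giving the upper bound, the second the lower bound. The extension to more general convex bodies in place of $\mathcal{B}_r$ is identical, replacing $\mathcal{B}_{r \pm l_0}$ by the corresponding inner and outer parallel bodies.
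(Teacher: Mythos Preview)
Your proof is correct and follows essentially the same approach as the paper: both establish the chain of inclusions $\mathcal{B}_{r-l_0} \subset \bigcup_{\xx\in\Lambda\cap \mathcal{B}_{r}}(\xx+\mathcal{P})\subset \mathcal{B}_{r+l_0}$ via the triangle inequality and then take volumes. You are slightly more explicit than the paper in noting that disjointness of the translates is what makes the upper-bound volume computation go through.
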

\begin{proof}
	We show the set inclusion
	$$\mathcal{B}_{r-l_0} \subset \bigcup_{\xx\in\Lambda\cap \mathcal{B}_{r}}(\xx+\mathcal{P})\subset \mathcal{B}_{r+l_0}.$$
	The lemma then follows from a simple volume calculation of the three sets.
	
	If $\yy = \xx + \mathbf{p}$, $\xx \in \Lambda\cap\mathcal{B}_r$, $\mathbf{p} \in \mathcal{P}$, then $\left\| \yy \right\| \leq \left\| \xx \right\|+\left\| \mathbf{p}\right\|  \leq r+l_0$, proving the second inclusion. For the first inclusion, let $\yy \in \mathcal{B}_{r-l_0}$ and write it as $\yy = \xx + \mathbf{p}$, with $\xx \in \Lambda$ and $\mathbf{p} \in \mathcal{P}$ (this is always true since $\mathcal{P}$ is a fundamental region). Then $\left\| \xx\right\| \leq \left\| \yy \right\| + \left\| \mathbf{p} \right\| \leq r$.
\end{proof}
\subsection{Effective families containing dense lattices}
The next proposition essentially says that if the base lattice $\Lambda$ is sufficiently ``thin'' and the kernel-lattices $\Lambda_{p_j}$ are not so sparse, it is possible to bound $p_j$ in terms of the rate of the underlying code. The conditions are very mild (they are achievable, for instance, by $\Lambda = \mathbb{Z}^m$ and $\Lambda_p = p\mathbb{Z}^m$).  For convenience, we recall the definition of the Hermite parameter $\gamma(\Lambda) = \lambda_1(\Lambda)/V(\Lambda)^{1/m}$ and define the covering parameter as $\mu(\Lambda) = \tau(\Lambda)/V(\Lambda)^{1/m}$. Recall that $\rho(\Lambda) = \lambda_1(\Lambda)/2$ is the packing radius of $\Lambda$. A lattice satisfying the Minkowski-Hlawka bound has density
$$\Delta = \frac{V_m \rho(\Lambda)^m}{V(\Lambda)} > 1/2^{m-1} \Rightarrow \rho(\Lambda) \geq 2^{1-1/m}\left(\frac{V(\Lambda)}{V_m}\right)^{1/m},$$
where $V_m = \vol\, \mathcal{B}_1$ is the volume of the unit ball in $\mathbb{R}^m$. Recalling that $V_m^{-1/m} \sim \sqrt{m/2\pi e}$, if $V(\Lambda)$ is normalized to one, this implies that the packing radius of good lattices should scale as 
$$\rho(\Lambda) \sim  2 \sqrt{\frac{{m}}{2\pi e}}.$$

\begin{prop}
Let the notation be as in Theorem \ref{thm:2} and let $\varepsilon > 0$. Let $\delta = k/n$ be the rate of the underlying codes. Suppose that
\begin{enumerate}
\item[(i)] $p_j^{n\delta/m} \gamma(\Lambda_{p_j}) = \Omega(\sqrt{m})\,\,$ and 
\item[(ii)]  $p_j^{n/m}  = \omega( m \mu(\Lambda)/\gamma(\Lambda_{p_j})).$ 
\end{enumerate}
If $m$ is sufficiently large, there exists a code with parameters $(n,k,p_j)$ such that the lattice $\Lambda_{p_j}(\mathcal{C})$ has packing density greater than $(1-\varepsilon)/2^{m-1}$.
\label{prop:family_sizes}
\end{prop}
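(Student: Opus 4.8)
The plan is to make quantitative the heuristic that "$\Lambda_{p_j}(\mathcal{C})$ is dense provided the number of short primitive points is less than $2$''. First I would invoke Theorem \ref{thm:AverageEnsembleGeneral} (equivalently Theorem \ref{thm:2}) applied to the indicator of a ball of radius $r$: there exists a code $\mathcal{C}$ with $\mathcal{N}_{\beta\Lambda_{p_j}(\mathcal{C})'}(r) \le (\zeta(m)V)^{-1}\mathrm{vol}\,\mathcal{B}_r$ once $p_j$ is large enough, and choosing $r$ so that the right-hand side equals $2(1-\varepsilon')$ forces $\mathcal{N}=0$, hence packing density $\ge (1-\varepsilon)/2^{m-1}$. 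The content of the proposition is to replace the qualitative "$p_j$ large enough'' by the explicit conditions (i)--(ii). So the real work is: (a) a non-asymptotic estimate of the error term in the averaging identity, showing it is negligible under the stated hypotheses, and (b) verifying that the resulting finite lower bound on $p_j$ is implied by (i) and (ii).

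For (a) I would go back to the proof of Theorem \ref{thm:AverageEnsembleGeneral} with $f = \mathbbm{1}_{\mathcal{B}_r}$. The expectation splits (cf.\ \eqref{eq:splitAverage}) into the kernel contribution and the non-kernel contribution. The kernel contribution is bounded, via $\|\beta\xx\| \ge \beta\lambda_1(\Lambda_{p_j})$ and Lemma \ref{lem:pointsEnumerator} applied to $\Lambda_{p_j}$, by the number of points of $\beta\Lambda_{p_j}$ inside $\mathcal{B}_r$; this vanishes exactly when $\beta\lambda_1(\Lambda_{p_j}) > r$, i.e.\ when $p_j^{n\delta/m}\gamma(\Lambda_{p_j})$ exceeds a constant times $r \sim \sqrt{m}$ — this is precisely hypothesis (i) (here $\beta \asymp p_j^{-(n-k)/m}V(\Lambda)^{-1/m}$ and $r\asymp\sqrt{m}$). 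The non-kernel contribution, via Loeliger's identity \eqref{eq:LoeligerLemma} and Möbius inversion, equals $\sum_r \mu(r)r^{-m}\sum_{\xx\in\Lambda\setminus\{0\}} r^m\gamma\, f(r\beta\xx)$, and I would compare the Riemann sum $\sum_{\xx\in\Lambda}\gamma f(\beta\xx)$ to the integral $V^{-1}\zeta(m)^{-1}\mathrm{vol}\,\mathcal{B}_r$; the discretization error is controlled by the mesh $\beta\,\lambda_m(\Lambda) \lesssim \beta\,\tau(\Lambda) = p_j^{-n/m}\cdot\mu(\Lambda)$ relative to the radius $r\asymp\sqrt m$, which is negligible precisely when $p_j^{n/m} = \omega(\sqrt m\,\mu(\Lambda))$ — and since $\gamma(\Lambda_{p_j})$ is bounded above (Hermite), this follows from (ii). (The factor $m/\gamma(\Lambda_{p_j})$ in (ii) rather than $\sqrt m$ gives the slack needed to absorb the surface-area error term $\mathrm{vol}\,\partial\mathcal{B}_r \cdot \text{mesh}$, which carries an extra factor $\sim r \sim \sqrt m$.) Putting both bounds together, for $m$ large the averaged count is within $\varepsilon'$ of $2(1-\varepsilon')<2$, so some lattice in the ensemble has no nonzero primitive point of norm $\le r$, hence no nonzero point of norm $\le r$, giving the claimed density.

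I expect the main obstacle to be step (a)'s discretization estimate made fully rigorous with the right dependence on $m$: one must show the Riemann sum converges to the integral with an error that, after the normalization by $\beta$ and for the specific radius $r\asymp\sqrt{m}$, is smaller than a fixed $\varepsilon'$, and this requires tracking how the covering radius $\tau(\Lambda)$ of the \emph{base} lattice and the high-dimensional geometry of $\mathcal{B}_r$ interact. A clean way is to sandwich $\mathrm{vol}\,\mathcal{B}_r$ between $\sum_{\xx}\mathrm{vol}(\text{Voronoi cell})\,\mathbbm{1}[\beta\xx\in\mathcal{B}_{r-\beta\tau}]$ and the analogous sum over $\mathcal{B}_{r+\beta\tau}$ using Lemma \ref{lem:pointsEnumerator}, then bound $((r+\beta\tau)^m - (r-\beta\tau)^m)/r^m = O(m\beta\tau/r)$, which is $o(1)$ exactly under (ii). The other steps — invoking Theorem \ref{thm:AverageEnsembleGeneral}, the "two minimal vectors'' argument, and the arithmetic $\beta\asymp p_j^{-n(1-\delta)/m}V(\Lambda)^{-1/m}$, $r\asymp\sqrt{m}$ — are routine bookkeeping.
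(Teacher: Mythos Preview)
Your proposal is correct and follows essentially the same route as the paper: split the expected point count into the kernel contribution (vanishes when $\beta\lambda_1(\Lambda_{p_j})>r$, which is condition (i)) and the non-kernel contribution, then control the latter via Lemma~\ref{lem:pointsEnumerator} with the covering radius of the base lattice, yielding the factor $(1+\mu(\Lambda)/(r\,p_j^{(n-k)/m}))^m$ which tends to $1$ under condition (ii). The paper is slightly more direct than your sketch: it simply bounds the primitive-point count by the full point count (no M\"obius inversion) and uses only the upper half of Lemma~\ref{lem:pointsEnumerator} rather than a two-sided sandwich, so there is no need to track a Riemann-sum discretization error---the inequality $\#(\beta\Lambda'\cap\mathcal{B}_r)\le (r+\beta\tau(\Lambda))^m V_m/(\beta^m V(\Lambda))$ already gives what is needed.
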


\begin{proof} 	For simplicity, suppose that the volume of $\Lambda_{p_j}(\mathcal{C})$ equals $1$, which can be achieved by choosing an appropriate scaling factor. Considering the above discussion let $r = \sqrt{m/2\pi e}$. In the notation of the proof of Theorem \ref{thm:2}, the average lattice point enumerator (Equation \eqref{eq:splitAverage}) becomes:
	\begin{equation*}
	\begin{split} \mathbb{E}\left[\#\left(\beta\Lambda_{p_j}^{\prime}(C)\cap \mathcal{B}_r\right) \right] &= \mathbb{E}\left[\#\left(\beta\Lambda_{p_j}^{\prime}\cap \mathcal{B}_r\right) \right] \\ &+ \mathbb{E}\left[\#\left(\beta(\Lambda_{p_j}^{\prime}(C)\backslash\Lambda_p)\cap \mathcal{B}_r\right)\right]. 
	\end{split}
	\end{equation*}
 The first term of the right-hand side zero whenever 
	\begin{equation} \frac{p_j^{k/m} \lambda_1(\Lambda_{p_j})}{V(\Lambda_{p_j})^{1/m}} \geq r. 
	\label{eq:effectivep}
	\end{equation}
	The second term satisfies
	\begin{equation}
	\begin{split}
	&\mathbb{E}\left[\#\left(\beta(\Lambda_{p_j}^{\prime}(C)\backslash\Lambda_p)\cap \mathcal{B}_r\right)\right] = \frac{p_j^k-1}{p_j^n-1} \#\left(\beta(\Lambda^{\prime}\backslash\Lambda_{p_j})\cap \mathcal{B}_r\right) \\
	\leq & p_{j}^{k-n}(r+\beta \tau(\Lambda))^m \frac{V_m}{\beta^m V(\Lambda)} = V_m r^m \left(1+\frac{\mu(\Lambda)}{r p_j^{(n-k)/m}}\right)^{m}.
	\end{split}
	\label{eq:countPointsEffecitve}
	\end{equation}
Imposing the right-hand-side of \eqref{eq:countPointsEffecitve} to be $2(1-\varepsilon)$, we obtain a lattice with density (cf. Section \ref{sec:prelim}):
\begin{equation} \Delta \geq \frac{1-\varepsilon}{2^{m-1}}\left(1+\frac{\mu(\Lambda)}{r p_j^{(n-k)/m}}\right)^{-m}.
\label{eq:densityEffective}
\end{equation}
Under the conditions of the theorem, the term in parenthesis tends to $1$ as $m \to \infty$.

\end{proof}
\begin{rmk}
	Similar conditions hold for the case of quaternionic lattices. In this case, in light of the proof of Theorem \ref{thm:matrices} and Equation \eqref{eq:boundPHurwitz}, condition (i) should be replaced by $p = \Omega( r^{2m/(2k-m)})$.
\end{rmk}
\begin{exe} Let $m = n$, $\Lambda = \mathbb{Z}^m$ and $\phi_p$ be the ``modulo-$p$'' reduction. Conditions (i)-(ii) of Proposition \ref{prop:family_sizes} state that
$$p \geq c_1 m^{1/2\delta} \mbox{ and } p \geq c_2 m^{3/2+\nu},$$
where $c_1,c_2$ are constants and $\nu$ is any small number. The optimal rate (i.e., the one that yields the smallest asymptotic behavior of $p$) is $\delta \sim 1/3$, which gives us optimal alphabet-size $p = m^{3/2+\nu}$, for any positive constant $\nu$. This provides an alternative derivation of \cite{Rush1989}.

\end{exe}
The alphabet-size in the above example can be further improved by starting the reductions with a lattice which already has a good density, as shown next.

Let $\Lambda = A_{n}^{l}$ be a Craig's lattice \cite[pp.222-224]{Splag} of rank $m = n$, where $n+1=q$ is a prime. From \cite[Prop. 4.1]{Bachoc92}, a Craig's lattice is similar to the embedding of the ideal $(1-\mu_p)^l \mathbb{Z}[\mu_p]$ in the cyclotomic field $\mathbb{Q}(\mu_p)$.  A concrete realization is 
$$\Lambda = \frac{1}{\sqrt{p}} \sigma((1-\mu_p)^l \mathbb{Z}[\mu_p]).$$
From this, we have $\Lambda^{*} \sim A_{n}^{n/2-l}$,
$$\frac{\lambda_1(\Lambda)}{V(\Lambda)^{1/n}} \geq \frac{\sqrt{2l}}{(n-1)^{({2l-1})/2n}}$$ $$\mbox{ and }\frac{\lambda_1(\Lambda^{*})}{V(\Lambda^{*})^{1/n}} \geq \frac{\sqrt{n-2l}}{(n-1)^{({n-2l-1})/2n}}.$$

Following \cite[p. 224]{Splag}'s suggestion, we consider Craig's lattices with parameter $l = \left\lfloor n/2\log (n+1) \right\rceil$ so that, for sufficiently large $n$,
$$\frac{\lambda_1(\Lambda)}{V(\Lambda)^{1/n}} \geq \sqrt{\frac{2\pi}{\log n}}\left(\sqrt{\frac{n}{2\pi e}}+o(1)\right)$$ $$\mbox{ and } \frac{\lambda_1(\Lambda^*)}{V(\Lambda^*)^{1/n}} \geq \sqrt{e}+o(1).$$
From Banaszczyk's transference bound \cite{Banaszczyk}: 
$$\frac{\tau(\Lambda)}{V(\Lambda)^{1/n}} \leq \frac{\sqrt{n}}{2\sqrt{e}+o(1)}.$$
Therefore, using a natural reduction, conditions (i) and (ii) in Proposition \ref{prop:family_sizes} become
$$p \geq c_1 (\sqrt{\log n})^{1/\delta} \mbox{ and } p \geq c_2(n \sqrt{\log n})^{1+\nu},$$
for some constants $c_1,c_2$ and any positive $\nu$. We can further optimize the rate by equalizing the coefficients from where we get
\begin{equation}\delta \sim \frac{\log \log n}{2\log n + \log \log n}.
\label{eq:rate}
\end{equation}

\begin{cor}
	Let $\Lambda = A_n^{\left\lfloor n/2(n+1)\right\rceil}$ and let $\phi_p$ be a natural reduction, as described in Example \ref{ex:naturalReduction}. Let $\varepsilon > 0$, $n$ sufficiently large and let the rate be as in \eqref{eq:rate}. There exists a code $\mathcal{C}$ with parameters 
	$$\left(n,\delta n, O((n \sqrt{\log n})^{1+\nu})\right),$$
	for any positive $\nu$,	such that $\Lambda_p(\mathcal{C})$ has packing density arbitrarily close to $(1-\varepsilon)/2^{m-1}$.
\end{cor}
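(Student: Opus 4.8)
The plan is to apply Proposition~\ref{prop:family_sizes} directly, with the base lattice taken to be the Craig lattice and with the alphabet and rate chosen exactly as in the discussion preceding the statement. First I would fix $\Lambda=A_n^{l}$ with $l=\lfloor n/2\log(n+1)\rceil$ and let $\phi_p$ be the natural reduction of Example~\ref{ex:naturalReduction}, so that $m=n$ and the kernel is $\Lambda_p=p\Lambda$; since the Hermite parameter is invariant under scaling, $\gamma(\Lambda_p)=\gamma(\Lambda)$, while $\mu(\Lambda)$ is simply the covering parameter of the Craig lattice itself. Hence conditions (i) and (ii) of Proposition~\ref{prop:family_sizes} become conditions on $p$ alone, expressed through the two ratios $\gamma(\Lambda)$ and $\mu(\Lambda)$ that were bounded in the displays just before the corollary (the bound on $\mu(\Lambda)$ being obtained there from $\gamma(\Lambda^{*})\ge\sqrt e+o(1)$ and Banaszczyk's transference inequality).

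Next I would substitute these bounds. Using $\gamma(\Lambda)=\Theta(\sqrt{n/\log n})$, condition~(i), namely $p^{n\delta/m}\gamma(\Lambda_p)=\Omega(\sqrt m)$, reduces to $p^{\delta}=\Omega(\sqrt{\log n})$, i.e.\ $p\ge c_1(\sqrt{\log n})^{1/\delta}$; and using $\mu(\Lambda)\le\sqrt n/(2\sqrt e+o(1))$ together with the lower bound on $\gamma(\Lambda)$, condition~(ii), namely $p^{n/m}=\omega(m\,\mu(\Lambda)/\gamma(\Lambda_p))$, reduces to $p=\omega(n\sqrt{\log n})$, i.e.\ $p\ge c_2(n\sqrt{\log n})^{1+\nu}$ for any fixed $\nu>0$. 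Equalizing the powers of $n$ implicit in these two thresholds forces $\tfrac{1}{2\delta}\log\log n$ to be of the same order as $\log n$, which yields the rate $\delta\sim\frac{\log\log n}{2\log n+\log\log n}$ of~\eqref{eq:rate}; for this $\delta$ and any fixed $\nu>0$ both thresholds are then met by every $p$ of order $(n\sqrt{\log n})^{1+\nu}$.

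Finally, by Bertrand's postulate one may choose such a $p$ to be prime with $p=O((n\sqrt{\log n})^{1+\nu})$, set $k=\lfloor\delta n\rfloor$ (so the rate is $\delta+o(1)$), and invoke Proposition~\ref{prop:family_sizes}: for all sufficiently large $n$ there is an $(n,k,p)$-code $\mathcal{C}$ with $\Delta(\Lambda_p(\mathcal{C}))>(1-\varepsilon)/2^{m-1}$, which is the claim. The only place that needs genuine care — and where I expect the real work to sit — is the asymptotic bookkeeping of the last step: one must verify that, with the chosen $\delta$ and $p$, the correction factor $\bigl(1+\mu(\Lambda)/(r\,p^{(n-k)/m})\bigr)^{-m}$ appearing in~\eqref{eq:densityEffective} really does tend to $1$ (equivalently, that condition~(ii) survives absorbing the $o(1)$ terms in the Craig-lattice estimates), and that the implied constant in condition~(i) is consistent with the choice $r=\sqrt{m/2\pi e}$. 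These are routine but unforgiving estimates; everything else is a direct quotation of Proposition~\ref{prop:family_sizes} and the numerology already in place.
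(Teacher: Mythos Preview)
Your proposal is correct and follows exactly the approach of the paper: the corollary has no separate proof there, but is simply the conclusion of the discussion immediately preceding it, in which the Craig-lattice estimates on $\gamma(\Lambda)$ and $\mu(\Lambda)$ are plugged into conditions (i)--(ii) of Proposition~\ref{prop:family_sizes} for the natural reduction, yielding the two thresholds $p\ge c_1(\sqrt{\log n})^{1/\delta}$ and $p\ge c_2(n\sqrt{\log n})^{1+\nu}$ and then the rate~\eqref{eq:rate}. Your write-up is, if anything, more explicit than the paper's (e.g.\ noting $\gamma(\Lambda_p)=\gamma(\Lambda)$, invoking Bertrand, and flagging the $o(1)$ bookkeeping in~\eqref{eq:densityEffective}); you also silently use the intended exponent $l=\lfloor n/(2\log(n+1))\rceil$ rather than the apparent typo in the corollary's statement.
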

%
We close this subsection with a comment on the absolute family size of a reduction. If the set of \textit{all} $(n,k,p)$ codes is considered, then the search space for a dense lattice is given by the Gaussian binomial:
\begin{equation}
\left[n \above 0pt k \right]_p = \prod_{i=1}^{k-1} \frac{p^{n}-p^{i}}{p^k-p^i} \sim p^{k(n-k)}.
\end{equation}
Plugging the bounds for $p$ in Proposition \ref{prop:family_sizes} gives an upper bound on the exhaustive search complexity. On Table \ref{tab:comparison} we provide a comparison of the parameters of some constructions in the literature in terms of rank of the base lattice $\Lambda$, code parameters, alphabet-size and log of the family-size
(contrary to a statement in \cite{Zemor,Moustrou}, the complexity of Rush's construction is $\exp(c n^2 \log n)$ rather than $\exp(n\log n)$, and therefore the gains of averaging over double-circulant codes/cyclotomic lattices are even higher than the ones stated).
\renewcommand{\figurename}{Table}
\begin{figure*}
\begin{center}
	\small 
	\begin{tabular}{|c|c|c|c|c|}
		\hline
		Construction & $\text{rank}(\Lambda)$ & $(n,k)$ & $p$ & Log family size \\
		\hline 
		Construction A over $\mathbb{Z}$ \cite{Rush1989,Loeliger} &  $m = n$ & $(n,\delta n), \delta \sim 1/3$ & $n^{3/2}$ & $n^2 \log n$ \\
		\hline 
		Random double-circulant \cite{Zemor} &  $m = n$ & $(n,n/2)$ & $n^{2} \log n$ & $n \log n$ \\
		\hline 
		Cyclotomic lattices \cite{Moustrou} &  $m = 2\Phi(l), l \in \mathbb{N}$ & $(2,1)$ & $l^3(\log l)^{\Phi(l)}$ & $m \log m$ \\
		\hline 
		 \rule{0pt}{0.9\normalbaselineskip} Craig's reduction & $m = n$ & $(n,\delta n), \delta \sim \frac{\log \log n}{2\log n + \log \log n}$ & $n (\log n)^{1/2}$ & $n^2 \log \log n$ \\[1ex] 
		\hline 
	\end{tabular}
	\end{center}
\caption{Parameters of different effective families contaning dense lattices. The rates $\delta$ are up to lower order terms, and the log family-sizes up to constants and lower order terms. $\Phi(l)$ denotes Euler's totient function.}
\label{tab:comparison}
\end{figure*}

\subsection{Packing Efficiency}
A cruder measure of goodness, which is suitable for coding applications, is the \textit{packing efficiency} \cite{ELZ05}. Define $\rho_{\eff}(\Lambda) = (V(\Lambda)/V_m)^{1/m}$ as the radius of a ball whose volume is $V(\Lambda)$. The Minkowski bound can be rephrased in terms of \textit{packing efficiency}, as 
\begin{equation}
\frac{\rho(\Lambda)}{\rho_{\eff}(\Lambda)} \geq \frac{1}{2}.
\end{equation}
A ``packing-good'' family of lattices is such that its packing efficiency is arbitrarily close to $1/2$. As shown in \cite[Sec. IV]{ELZ05}, it is possible to find families with good asymptotic packing efficiency using Loeliger's construction, provided that $p = O(m^{1/2+\beta})$, for any positive small $\beta$. Similarly, we can show that Craig's lattices constructions can achieve packing efficiency arbitrarily close to $1/2$ with alphabet-size $p = O((\log m)^{1/2+\beta})$.

\section{Final Discussion}
\label{sec:conclusion}
\textbf{Applications.} As observed by Loeliger \cite{Loeliger}, random ensembles of lattices are not only good in terms of packing density, but are also sphere-bound achieving when used as infinite constellations for the AWGN channel. Indeed, Rush \cite{Rush1989} and Loeliger's \cite{Loeliger} Construction A $\mathbb{Z}$-lattices are ubiquitous in applications to information transmission over Gaussian channels and networks. However, for other communication problems, such as information transmission in the presence of fading and multiple antennas, it is desirable to enrich the lattices with some algebraic (multiplicative) structure. To this purpose, several recent works such as \cite{Adaptative,KosiOngOggier,ConsAOggier,Our} present different constructions that attach a linear code to an algebraic lattice, but, to date, there is no unified analysis of such ensembles. The generalized reductions described here provide a method for establishing the ``goodness'' of all such constructions at once. It also provides a simple condition to verify if any new construction is ``good'' (e.g., sphere-bound achieving). This was indeed the initial motivation of the author.
\\[1\baselineskip]
\noindent\textbf{Further Perspectives. }The framework considered in this paper is used to provide simple alternative (coding-theoretic) proofs and improvements on previous refinements on the best packing density. It not only implies the existence of dense lattices, but also of \textit{structured lattices}, with the structured inherited from the underlying reduction. 

The question whether it is possible to improve on the $c n\log \log n/2^{n-1}$ asymptotic behavior of cyclotomic fields by specializing the reductions (or the family of codes) appropriately is still open. Furthermore, all known lower bounds on $\Delta_n$ are of the form $\Delta_n \geq 2^{-n(1+\varepsilon(n))}$, with $\varepsilon(n) = O(\log n / n)$, which improves only marginally on the Minkowski-Hlawka lower bound. According to Gruber \cite[p. 388]{GruberDiscrete}, Hlawka believed that no essential improvement can be made, probably meaning that the exponent $2$ is optimal. Nevertheless, the best known upper bound on $\Delta_n$, due to Kabatianskii and Leveshenstein, is of the form $C^{-n}$, where $C \approx 1.51$. Closing this gap is a long-standing open problem.

\section*{Acknowledgments}
The author would like to thank Cong Ling for his constant enthusiasm on the topic and for his suggestions, as well as both reviewers for pointing out inaccuracies in the first version of the manuscript. He also acknowledges Sueli Costa and Jean-Claude Belfiore for fruitful discussions and for hosting him at University of Campinas and at T\'el\'ecom Paristech, where part of this work was developed. This work was supported in part by FAPESP (fellowship 2014/20602-8).


\bibliographystyle{unsrt}
\bibliography{campbel}

\begin{thebibliography}{10}

\bibitem{KosiOngOggier}
W.~Kositwattanarerk, S.~S. Ong, and F.~Oggier.
\newblock {Construction A of Lattices Over Number Fields and Block Fading
  (Wiretap) Coding}.
\newblock {\em IEEE Transactions on Information Theory}, 61(5):2273--2282, May
  2015.

\bibitem{Adaptative}
Y.{-}C. Huang, K.~R. Narayanan, and P.{-}C. Wang.
\newblock Adaptive compute-and-forward with lattice codes over algebraic
  integers.
\newblock arXiv/1501.07740, 2015.

\bibitem{Our}
A.~Campello, C.~Ling, and J.~C. Belfiore.
\newblock Algebraic lattice codes achieve the capacity of the compound
  block-fading channel.
\newblock In {\em IEEE International Symposium on Information Theory (ISIT)},
  pages 910--914, July 2016.

\bibitem{Index}
Y.~C. Huang.
\newblock Lattice index codes from algebraic number fields.
\newblock {\em IEEE Transactions on Information Theory}, 63(4):2098--2112,
  April 2017.

\bibitem{Cassels}
J.~W.~S. Cassels.
\newblock {\em {An Introduction to the {G}eometry of {N}umbers}}.
\newblock Springer-Verlag, 1997.

\bibitem{Leker}
P.~M. Gruber and C.~G. Lekkerkerker.
\newblock {\em {G}eometry of {N}umbers}.
\newblock North-Holland, 1987.

\bibitem{Siegel}
C.~L. Siegel.
\newblock {A Mean Value Theorem in Geometry of Numbers}.
\newblock {\em Annals of Mathematics}, 46(2):340--347, 1945.

\bibitem{Rogers}
C.~A. Rogers.
\newblock {\em {Packing and Covering}}.
\newblock Cambridge University Press, March 1964.

\bibitem{Rush1989}
J.~A. Rush.
\newblock A lower bound on packing density.
\newblock {\em Inventiones mathematicae}, 98(3):499--509, 1989.

\bibitem{Loeliger}
H.-A. Loeliger.
\newblock Averaging bounds for lattices and linear codes.
\newblock {\em IEEE Transactions on Information Theory}, 43(6):1767--1773, Nov
  1997.

\bibitem{Vance}
S.~Vance.
\newblock Improved sphere packing lower bounds from hurwitz lattices.
\newblock {\em Advances in Mathematics}, 227(5):2144 -- 2156, 2011.

\bibitem{Venkatesh}
A.~Venkatesh.
\newblock A note on sphere packings in high dimension.
\newblock {\em International Mathematics Research Notices}, 2012.

\bibitem{Zemor}
P.~Gaborit and G.~Z{\'e}mor.
\newblock On the construction of dense lattices with a given automorphisms
  group.
\newblock {\em Ann. Inst. Fourier (Grenoble)}, 57(4):1051--1062, 2007.

\bibitem{Moustrou}
P.~Moustrou.
\newblock On the density of cyclotomic lattices constructed from codes.
\newblock {\em International Journal of Number Theory}, 13(05):1261--1274,
  2017.

\bibitem{Splag}
J.~H. Conway and N.~J.~A. Sloane.
\newblock {\em {Sphere-packings, lattices, and groups}}.
\newblock Springer-Verlag, New York, NY, USA, 1998.

\bibitem{ConsAOggier}
R.~Vehkalahti, W.~Kositwattanarerk, and F.~Oggier.
\newblock {Constructions A of lattices from number fields and division
  algebras}.
\newblock In {\em IEEE International Symposium on Information Theory (ISIT)},
  pages 2326--2330, June 2014.

\bibitem{LLBS_12}
C.~Ling, L.~Luzzi, J.-C. Belfiore, and D.~Stehle.
\newblock {Semantically Secure Lattice Codes for the Gaussian Wiretap Channel}.
\newblock {\em IEEE Transactions on Information Theory}, 60(10):6399--6416, Oct
  2014.

\bibitem{Fermat}
I.~Stewart and D.~Tall.
\newblock {\em {Algebraic Number Theory and Fermat's Last Theorem}}.
\newblock CRC Press, 3rd edition, 2001.

\bibitem{Notes}
F.~Oggier.
\newblock {\em Introduction to Algebraic Number Theory}.
\newblock Lecture notes available on the author's website
  http://www1.spms.ntu.edu.sg/~frederique/AA11.pdf, accessed on 26/07/2017.

\bibitem{Algebraic1}
J.~Neukirch.
\newblock {\em {Algebraic Number Theory}}, volume 322 of {\em Grundlehren der
  mathematischen Wissenschaften}.
\newblock Springer-Verlag Berlin Heidelberg, 1999.

\bibitem{Autissier}
P.~Autissier.
\newblock {Vari\'et\'es ab\'eliennes et th\'eor\`eme de Minkowski-Hlawka}.
\newblock {\em Manuscripta Mathematica}, 149(3):275--281, 2016.

\bibitem{Voight}
J~Voight.
\newblock {\em The arithmetic of quaternion algebras}.
\newblock 2017, (book available online
  https://math.dartmouth.edu/~jvoight/quat-book.pdf), accessed on 26/07/2017.

\bibitem{Bachoc92}
C.~Bachoc and C.~Batut.
\newblock {\'Etude algorithmique de r\'eseaux construits avec la forme trace}.
\newblock {\em Experiment. Math.}, 1(3):183--190, 1992.

\bibitem{Banaszczyk}
W.~Banaszczyk.
\newblock New bounds in some transference theorems in the geometry of numbers.
\newblock {\em Mathematische Annalen}, 296(1):625--635, 1993.

\bibitem{ELZ05}
U.~Erez, S.~Litsyn, and R.~Zamir.
\newblock Lattices which are good for (almost) everything.
\newblock {\em IEEE Transactions on Information Theory}, 51(10):3401--3416, Oct
  2005.

\bibitem{GruberDiscrete}
P.~Gruber.
\newblock {\em {Convex and Discrete Geometry}}.
\newblock Springer Berlin Heidelberg, 2007.

\end{thebibliography}
\end{document}